\newtheorem{lemma}{Lemma}
\newtheorem{theorem}{Theorem}
\newcommand{\BWT}{\ensuremath{\mathrm{BWT}}}
\newcommand{\SA}{\ensuremath{\mathrm{SA}}}
\newcommand{\LF}{\ensuremath{\mathrm{LF}}}
\begin{document}
\begin{frontmatter}
\begin{fmbox}
\dochead{Research}

\title{Prefix-Free Parsing for Building Big BWTs}

\author[
   addressref={aff1},  email={christinadotboucher@gmail.com}
]{\inits{C}\fnm{Christina} \snm{Boucher}}
\author[
   addressref={aff2,aff3},   email={travis.gagie@gmail.com}
]{\inits{T}\fnm{Travis} \snm{Gagie}}
\author[
   addressref={aff1,aff4},   email={kuhnle@ufl.edu}
]{\inits{A}\fnm{Alan} \snm{Kuhnle}}
\author[
addressref={aff5}, email={ben.langmead@gmail.com}
]{\inits{B}\fnm{Ben} \snm{Langmead}}
\author[
   addressref={aff6,aff7}, email={giovanni.manzini@uniupo.it}
]{\inits{G}\fnm{Giovanni} \snm{Manzini}}
\author[
   addressref={aff5}, email={tmun1@jhu.edu}
]{\inits{T}\fnm{Taher} \snm{Mun}}

\address[id=aff1]{
  \orgname{CISE, University of Florida}, 
  \city{Gainesville},                              
  \state{FL}
  \cny{USA}                                    
}

\address[id=aff2]{
  \orgname{EIT, Diego Portales University}, 
  \city{Santiago},                              
  \cny{Chile}                                    
}

\address[id=aff3]{
  \orgname{CeBiB}, 
  \city{Santiago},                              
  \cny{Chile}                                    
}

\address[id=aff4]{
  \orgname{Informatics Institute}, 
  \city{Gainesville},                              
  \state{FL}
  \cny{USA}                                    
}

\address[id=aff5]{
  \orgname{Johns Hopkins University}, 
  \city{Baltimore},                              
  \state{MD},
  \cny{USA}                                    
}

\address[id=aff6]{
  \orgname{University of Eastern Piedmont}, 
  \city{Alessandria},                              
  \cny{Italy}                                    
}

\address[id=aff7]{
  \orgname{IIT, CNR}, 
  \city{Pisa},                              
  \cny{Italy}                                    
}
\end{fmbox}

\begin{abstractbox}
\begin{abstract}
    High-throughput sequencing technologies have led to explosive growth of genomic databases; one of which will soon reach hundreds of terabytes.  For many applications we want to build and store indexes of these databases but constructing such indexes is a challenge.  Fortunately, many of these genomic databases are highly-repetitive---a characteristic that can be exploited to ease the computation of the Burrows-Wheeler Transform (BWT), which underlies many popular indexes.  In this paper, we introduce a preprocessing algorithm, referred to as {\em prefix-free parsing}, that takes a text $T$ as input, and in one-pass generates a dictionary $D$ and a parse $P$ of $T$ with the property that the BWT of $T$ can be constructed from $D$ and $P$ using workspace proportional to their total size and $O (|T|)$-time.  Our experiments show that $D$ and $P$ are significantly smaller than $T$ in practice, and thus, can fit in a reasonable internal memory even when $T$ is very large.  In particular, we show that with prefix-free parsing we can build an 131-megabyte run-length compressed FM-index (restricted to support only counting and not locating) for 1000 copies of human chromosome 19 in 2 hours using 21 gigabytes of memory, suggesting that we can build a 6.73 gigabyte index for 1000 complete human-genome haplotypes in approximately 102 hours using about 1 terabyte of memory.


\end{abstract}
\begin{keyword}
\kwd{Burrows-Wheeler Transform}
\kwd{prefix-free parsing}
\kwd{compression-aware algorithms}
\kwd{genomic databases}
\end{keyword}
\end{abstractbox}

\end{frontmatter}

\section{Introduction}
\label{sec:introduction}

The money and time needed to sequence a genome have shrunk shockingly quickly and researchers' ambitions have grown almost as quickly: the Human Genome Project cost billions of dollars and took a decade but now we can sequence a genome for about a thousand dollars in about a day.  The 1000 Genomes Project \cite{1000genomes} was announced in 2008 and completed in 2015, and now the 100,000 Genomes Project is well under way \cite{100K}.  With no compression 100,000 human genomes occupy roughly 300 terabytes of space, and genomic databases will have grown even more by the time a standard research machine has that much RAM.   At the same time, other initiatives have began to study how microbial species behave and thrive in environments.  These initiatives are generating public datasets, which are larger than the 100,000 Genomes Project.   For example, in recent years, there has been an initiative to move toward using whole genome sequencing to accurately identify and track foodborne pathogens (e.g. antibiotic-resistant bacteria)~\cite{carleton2016whole}. This led to the  GenomeTrakr initiative, which is a large public effort to use genome sequencing for surveillance and detection of outbreaks of foodborne illnesses. Currently, GenomeTrakr includes over 100,000 samples, spanning several species available through this initiative---a number that continues to rise as datasets are continually added \cite{genometrakr}.  Unfortunately, analysis of this data is limited due to their size, even though the similarity between genomes of individuals of the same species means the data is highly compressible.

These public databases are used in various applications --- e.g., to detect genetic variation within individuals, determine evolutionary history within a population, and assemble the genomes of novel (microbial) species or genes.  Pattern matching within these large databases is fundamental to all these applications, yet repeatedly scanning these --- even compressed --- databases is infeasible.  Thus, for these and many other applications, we want to build and use indexes from the database.  Since these indexes should fit in RAM and cannot rely on word boundaries, there are only a few candidates.  Many of the popular indexes in bioinformatics are based on the Burrows-Wheeler Transform (BWT)~\cite{BW94} and there have been a number of papers about building BWTs for genomic databases, e.g.,~\cite{Sir16} and references therein.  However, it is difficult to process anything more than a few terabytes of raw data per day with current techniques and technology because of the difficulty of working in external memory.

Since genomic databases are often highly repetitive, we revisit the idea of applying a simple compression scheme and then computing the BWT from the resulting encoding in internal memory.  This is far from being a novel idea --- e.g., Ferragina, Gagie and Manzini's {\tt bwtdisk} software~\cite{FGM12} could already in 2010 take advantage of its input being given compressed, and Policriti and Prezza~\cite{PP17} showed how to compute the BWT from the LZ77 parse of the input using $O (n (\log r + \log z))$-time and $O (r + z)$-space, where $n$ is the length of the uncompressed input, $r$ is the number of runs in the BWT and $z$ is the number of phrases in the LZ77 parse --- but we think the preprocessing step we describe here, {\em prefix-free parsing}, stands out because of its simplicity and flexibility.   Once we have the results of the parsing, which are a dictionary and a parse, building the BWT out of them is more involved, yet when our approach works well, the dictionary and the parse are together much smaller than the initial dataset and that makes the BWT computation less resource-intensive.

{\bf Our Contributions.} In this paper, we formally define and present prefix-free parsing. The main idea of this method is to divide the input text into overlapping variable-length phrases with delimiting prefixes and suffixes.   To accomplish this division, we slide a window of length $w$ over the text and, whenever the Karp-Rabin hash of the window is 0 modulo $p$, we terminate the current phrase at the end of the window and start the next one at the beginning of the window.  This concept is partly inspired by {\tt rsync}'s~\cite{rsync} use of a rolling hash for content-slicing.  Here, $w$ and $p$ are parameters that affect the size of the dictionary of distinct phrases and the number of phrases in the parse.  This takes linear-time and one pass over the text, or it can be sped up by running several windows in different positions over the text in parallel and then merging the results.

Just as {\tt rsync} can usually recognize when most of a file remains the same, we expect that for most genomic databases and good choices of $w$ and $p$, the total length of the phrases in the dictionary and the number of phrases in the parse will be small in comparison to the uncompressed size of the database.  We demonstrate experimentally that with prefix-free parsing we can compute BWT using less memory and equivalent time.  In particular, using our method we reduce peak memory usage up to 10x over a standard baseline algorithm which computes the BWT by first computing the suffix array using the algorithm SACA-K~\cite{tois/Nong13}, while requiring roughly the same time on large sets of salmonella genomes obtained from GenomeTrakr. 

In Section~\ref{sec:theory}, we show how we can compute the BWT of the text from the dictionary and the parse alone using workspace proportional only to their total size, and time linear in the uncompressed size of the text when we can work in internal memory.  In Section~\ref{sec:practice} we describe our implementation and report the results of our experiments showing that in practice the dictionary and parse often are significantly smaller than the text and so may fit in a reasonable internal memory even when the text is very large, and that this often makes the overall BWT computation both faster and smaller.  In Section~\ref{sec:indexing} we describe how we build run-length compressed FM-indexes~\cite{FM05} (which only support counting and not locating) for datasets consisting of 50, 100, 200 and 500 using prefix-free parsing.  Our results suggest that we can build a roughly 6.73-gigabyte index for 1000 complete human genomes in about 102 hours using about 1.1 terabytes of memory.
Prefix-free parsing and all accompanied documents are available at \url{https://gitlab.com/manzai/Big-BWT}.

\section{Review of the Burrows-Wheeler Transform}
\label{app:bwt}

As part of the Human Genome Project, researchers had to piece together a huge number of relatively tiny, overlapping pieces of DNA, called reads, to assemble a reference genome about which they had little prior knowledge.  Once the Project was completed, however, they could then use that reference genome as a guide to assemble other human genomes much more easily.  To do this, they indexed the reference genome such that, after running a DNA sample from a new person through a sequencing machine and obtaining another collection of reads, for each of those new reads they could quickly determine which part of the reference genome it matched most closely.  Since any two humans are genetically very similar, aligning the new reads against the reference genome gives a good idea of how they are really laid out in the person's genome.

In practice, the best solutions to this problem of indexed approximate matching work by reducing it to a problem of indexed exact matching, which we can formalize as follows: given a string $T$ (which can be the concatenation of a collection of strings, terminated by special symbols), pre-process it such that later, given a pattern $P$, we can quickly list all the locations where $P$ occurs in $T$.  We now start with a simple but impractical solution to the latter problem, and then refine it until we arrive at a fair approximation of the basis of most modern assemblers, illustrating the workings of the Burrows-Wheeler Transform (BWT) along the way.

Suppose we want to index the three strings {\rm GATTACAT}, {\rm GATACAT} and {\rm GATTAGATA}, so $T [0..n -1] = \mathrm{GATTACAT\$_1GATACAT\$_2GATTAGATA\$_3}$, where $\$_1$, $\$_2$ and $\$_3$ are terminator symbols.  Perhaps the simplest solution to the problem of indexing $T$ is to build a trie of the suffixes of the three strings in our collection (i.e., an edge-labelled tree whose root-to-leaf paths are the suffixes of those strings) with each leaf storing the starting position of the suffix labelling the path to that leaf, as shown in Figure~\ref{fig:trie}.

\begin{figure}
\begin{center}
\includegraphics[width=\textwidth]{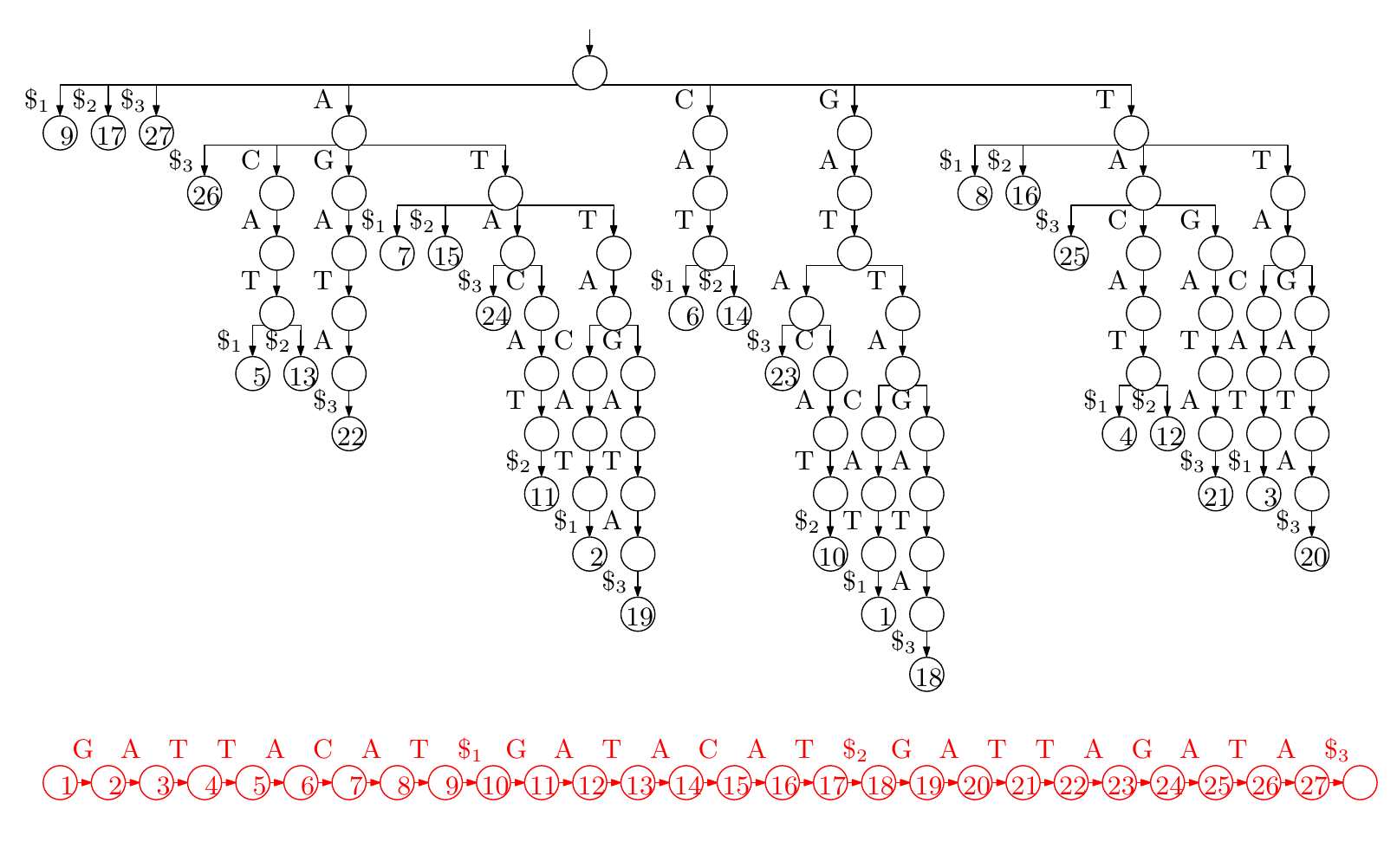}
\caption{The suffix trie for our example with the three strings {\rm GATTACAT}, {\rm GATACAT} and {\rm GATTAGATA}.  The input is shown at the bottom, in red because we do not need to store it.}
\label{fig:trie}
\end{center}
\end{figure}

Suppose every node stores pointers to its children and its leftmost and rightmost leaf descendants, and every leaf stores a pointer to the next leaf to its right.  Then given $P [0..m-1]$, we can start at the root and descend along a path (if there is one) such that the label on the edge leading to the node at depth $i$ is $P [i - 1]$, until we reach a node $v$ at depth $m$.  We then traverse the leaves in $v$'s subtree, reporting the the starting positions stored at them, by following the pointer from $v$ to its leftmost leaf descendant and then following the pointer from each leaf to the next leaf to its right until we reach $v$'s rightmost leaf descendant.

The trie of the suffixes can have a quadratic number of nodes, so it is impractical for large strings.  If we remove nodes with exactly one child (concatenating the edge-labels above and below them), however, then there are only linearly many nodes, and each edge-label is a substring of the input and can be represented in constant space if we have the input stored as well.  The resulting structure is essentially a suffix tree (although it lacks suffix and Weiner links), as shown in Figure~\ref{fig:tree}.  Notice that the label of the path leading to a node $v$ is the longest common prefix of the suffixes starting at the positions stored at $v$'s leftmost and rightmost leaf descendants, so we can navigate in the suffix tree, using only the pointers we already have and access to the input.

\begin{figure}
\begin{center}
\includegraphics[width=\textwidth]{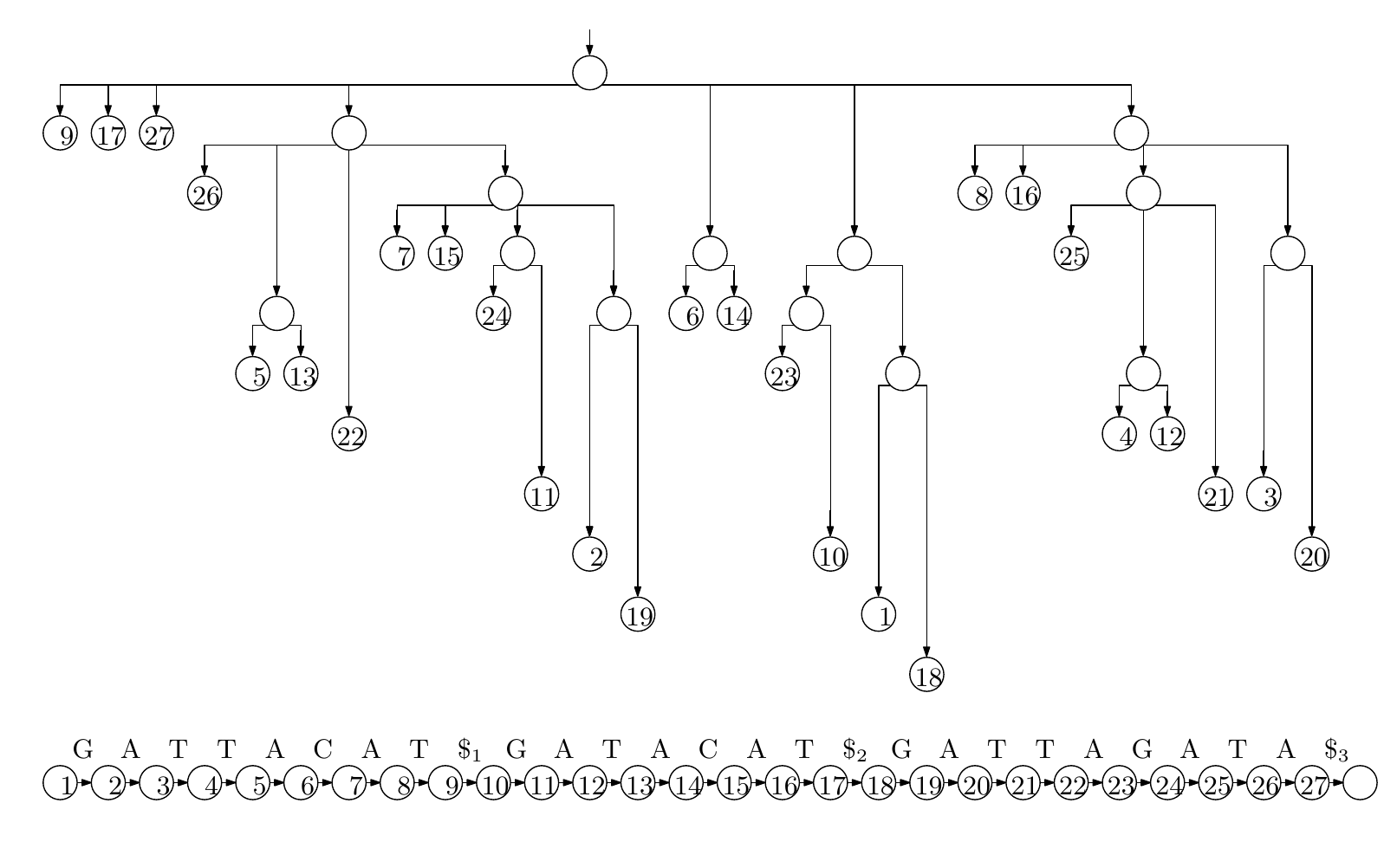}
\caption{The suffix tree for our example.  We now also need to store the input.}
\label{fig:tree}
\end{center}
\end{figure}

Although linear, the suffix tree still takes up an impractical amount of space, using several bytes for each character of the input.  This is significantly reduced if we discard the shape of the tree, keeping only the input and the starting positions in an array, which is called the suffix array (SA).  The SA for our example is shown in Figure~\ref{fig:array}.  Since the entries of the SA are the starting points of the suffixes in lexicographic order, with access to $T$ we can perform two binary searches to find the endpoints of the interval of the suffix array containing the starting points of suffixes starting with $P$: at each step, we consider an entry $\SA [i]$ and check if $T [\SA [i]]$ lexicographically precedes $P$.  This takes a total of $O (m \log n)$ time done na\"ively, and can be sped up with more sophisticated searching and relatively small auxiliary data structures.

\begin{figure}
\begin{center}
\includegraphics[width=\textwidth]{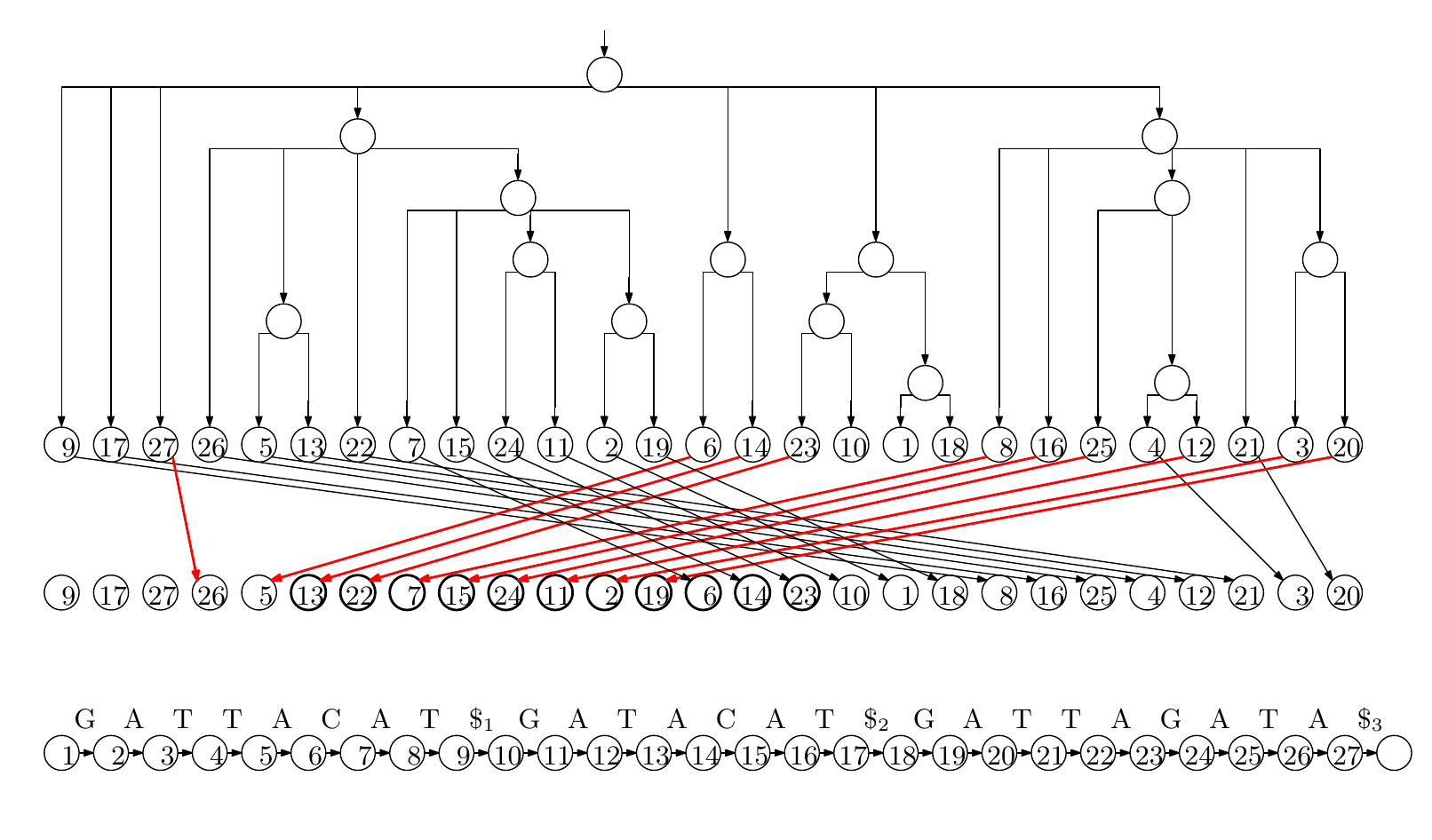}
\caption{The suffix array for our example is the sequence of values stored in the leaves of the tree (which we need not store explicitly).  The LF mapping is shown as the arrows between two copies of the suffix array; the arrows to values $i$ such that $T [\SA [i]] = \mathtt{A}$ are in red, to illustrate that they point to consecutive positions in the suffix array and do not cross.  Since $\Psi$ is the inverse of the LF mapping, it can be obtained by simply reversing the direction of the arrows.}
\label{fig:array}
\end{center}
\end{figure}

Even the SA takes linear space, however, which is significantly more than what is needed to store the input when the alphabet is small (as it is in the case of DNA).  Let $\Psi$ be the function that, given the position of a value $i < n - 1$ in the SA, returns the position of $i + 1$.  Notice that, if we write down the first character of each suffix in the order they appear in the SA, the result is a sorted list of the characters in $T$, which can be stored using using $O (\log n)$ bits for each character in the alphabet.  Once we have this list stored, given a position $i$ in SA, we can return $T [\SA [i]]$ efficiently.

Given a position $i$ in SA and a way to evaluate $\Psi$, we can extract $T [SA [i]..n - 1]$ by writing $T [\SA [i]], T [\SA [\Psi (i)]], T [\SA [\Psi^2 (i)]], \ldots$.  Therefore, we can perform the same kind of binary search we use when with access to a full suffix array.  Notice that if $T [\SA [i]] \prec T [\SA [i + 1]]$ then $\Psi (i) < \Psi (i + 1)$, meaning that $\Psi (1), \ldots, \Psi (n)$ can be divided into $\sigma$ increasing consecutive subsequences, where $\sigma$ is the size of the alphabet.   Here, $\prec$ denotes lexicographic precedence. It follows that we can store $n H_0 (T) + o (n \log \sigma)$ bits, where $H_0 (T)$ is the 0th-order empirical entropy of $T$, such that we can quickly evaluate $\Psi$.  This bound can be improved with a more careful analysis.

Now suppose that instead of a way to evaluate $\Psi$, we have a way to evaluate quickly its inverse, which is called the last-to-first (LF) mapping.  (This name was not chosen because, if we start with the position of $n$ in the suffix array and repeatedly apply the LF mapping we enumerate the positions in the SA in decreasing order of their contents, ending with 1; to some extent, the name is a lucky coincidence.)  The LF mapping for our example is also shown with arrows in Figure~\ref{fig:array}.  Since it is the inverse of $\Psi$, the sequence $\LF (1), \ldots, \LF (n)$ can be partitioned into $\sigma$ incrementing subsequences: for each character $c$ in the alphabet, if the starting positions of suffixes preceded by copies of $c$ are stored in $\SA [j_1], \ldots, \SA [j_t]$ (appearing in that order in the SA), then $\LF (j_1)$ is 1 greater than the number of characters lexicographically less than $c$ in $T$ and $\LF (j_2), \ldots, \LF (j_t)$ are the next $t - 1$ numbers.  Figure~\ref{fig:array} illustrates this, 
with heavier arrows to values $i$ such that $T [\SA [i]] = \mathrm{A}$, to illustrate that they point to consecutive positions in the suffix array and do not cross.

Consider the interval $I_{P [i..m-1]}$ of the SA containing the starting positions of suffixes beginning with $P [i..m-1]$, and the interval $I_{P [i - 1]}$ containing the starting positions of suffixes beginning with $P [i - 1]$.  If we apply the LF mapping to the SA positions in $I_{P [i..m -1]-1}$, the SA positions we obtain that lie in $I_{P [i - 1]}$ for a consecutive subinterval, containing the starting positions in $T$ of suffixes beginning with $P [i - 1..m-1]$.  Therefore, we can search also with the LF mapping.

If we write the character preceding each suffix of $T$ (considering it to be cyclic) in the lexicographic order of the suffixes, the result is the Burrows-Wheeler Transform (BWT) of $T$.  A rank data structure over the BWT (which, given a character and a position, returns the number of occurrences of that character up to that position) can be used to implement searching with the LF-mapping, together with an array $C$ indicating for each character in the alphabet how many characters in $T$ are lexicographically strictly smaller than it.  Specifically,
\[\LF (i) = \BWT.\mathrm{rank}_{\BWT [i]} (i) + C [\BWT [i]]\,.\]

If follows that, to compute $I_{P [i - 1..m-1]}$ from $I_{P [i..m-1]}$, we perform a rank query for $P [i - 1]$ immediately before the beginning of $I_{P [i..m-1]}$ and add $C [P [i + 1]] + 1$ to the result, to find the beginning of $I_{P [i - 1..m-1]}$; and we perform a rank query for $P [i - 1]$ at the end of $I_{P [i..m-1]}$ and add $C [P [i + 1]]$ to the result, to find the end of $I_{P [i - 1..m-1]}$.  Figure~\ref{fig:BWT} shows the BWT for our example, and the sorted list of characters in $T$.  Comparing it to Figure~\ref{fig:array} makes the formula above clear: if $\BWT [i]$ is the $j$th occurrence of that character in the BWT, then the arrow from $\LF (i)$ leads from $i$ to the position of the $j$th occurrence of that character in the sorted list.  This is the main idea behind FM-indexes~\cite{FM05}, and the main motivation for bioinformaticians to be interested in building BWTs.

\begin{figure}
\begin{center}
\includegraphics[width=\textwidth]{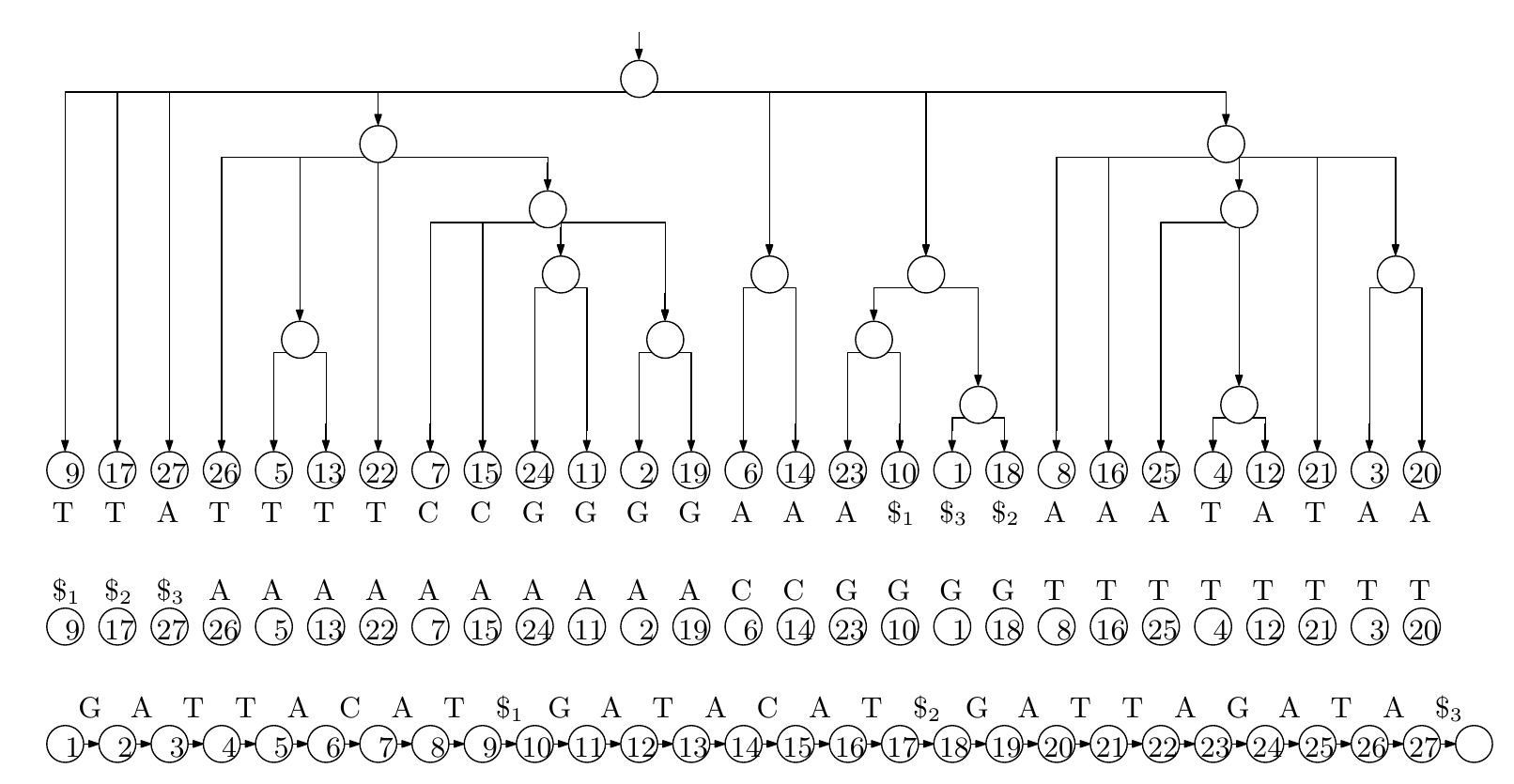}
\caption{The BWT and the sorted list of characters for our example.  Drawing arrows between corresponding occurrences of characters in the two strings gives us the diagram for the LF-mapping.}
\label{fig:BWT}
\end{center}
\end{figure}

\section{Theory of Prefix Free Parsing}
\label{sec:theory}

We let $E \subseteq \Sigma^w$ be any set of strings each of length $w \geq 1$ over the alphabet $\Sigma$ and let $E' = E \cup \{\mathtt{\#}, \mathtt{\$}^w\}$, where $\mathtt{\#}$ and $\mathtt{\$}$ are special symbols lexicographically less than any in $\Sigma$.  We consider a text $T [0..n - 1]$ over $\Sigma$ and let $D$ be the maximum set such that for $d \in D$ the following conditions hold
\begin{itemize}
\item $d$ is a substring of $\mathtt{\#}\,T\,\mathtt{\$}^w$,
\item exactly one proper prefix of $d$ is in $E'$,
\item exactly one proper suffix of $d$ is in $E'$,
\item no other substring of $d$ is in $E'$.
\end{itemize}

Given $T$ and a way to recognize strings in $E$, we can build $D$ iteratively by scanning $\mathtt{\#}\,T\,\mathtt{\$}^w$ to find occurrences of elements of $E'$, and adding to $D$ each substring of $\mathtt{\#}\,T\,\mathtt{\$}^w$ that starts at the beginning of one such occurrence and ends at the end of the next one.  While we are building $D$ we also build a list $P$ of the occurrences of the elements of $D$ in $T$, which we call the parse (although each consecutive pair of elements overlap by $w$ characters, so $P$ is not a partition of the characters of $\mathtt{\#}\,T\,\mathtt{\$}^w$).  In $P$ we identify each element of $D$ with its lexicographic rank. 

For example, suppose we have $\Sigma = \{\mathtt{!}, \mathtt{A}, \mathtt{C}, \mathtt{G}, \mathtt{T}\}$, $w = 2$, $E = \{\mathtt{AC}, \mathtt{AG}, \mathtt{T!}\}$ and
\[T = \mbox{\tt GATTACAT!GATACAT!GATTAGATA}\,.\]
Then, we get
$$
D  = \{\mathtt{\#GATTAC}, \mathtt{ACAT!}, \mathtt{AGATA\$\$}, \mathtt{T!GATAC}, \mathtt{T!GATTAG}\},
$$
the parse of $\mathtt{\#}\,T\,\mathtt{\$}^w$ is 
$$
\mathtt{\#GATTAC}\; \mathtt{ACAT!}\; \mathtt{T!GATAC}\; \mathtt{ACAT!}\; \mathtt{T!GATTAG}\; 
\mathtt{AGATA\$\$}
$$
and, identifying elements of $D$ by their lexicographic ranks, the resulting array $P$ is $P = [0, 1, 3, 1, 4, 2]$.

Next, we define $S$ as the set of suffixes of length greater than $w$ of elements of $D$. In our previous example we get
\begin{eqnarray*}
S & = & \{\mathtt{\#GATTAC}, \mathtt{GATTAC}, \ldots, \mathtt{TAC},\\
     && \mathtt{ACAT!}, \mathtt{CAT!}, \mathtt{AT!},\\
     && \mathtt{AGATA\$\$}, \mathtt{GATA\$\$}, \ldots, \mathtt{A\$\$},\\
     && \mathtt{T!GATAC}, \mathtt{!GATAC}, \ldots, \mathtt{TAC},\\
     && \mathtt{T!GATTAG}, \mathtt{!GATTAG}, \ldots, \mathtt{TAG}\}.
\end{eqnarray*}

\begin{lemma}\label{lem:prefix-free}
$S$ is a prefix-free set.
\end{lemma}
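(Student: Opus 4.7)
The plan is to argue by contradiction: suppose $s, s' \in S$ with $s$ a proper prefix of $s'$. Let $s$ be a suffix of length $> w$ of some $d \in D$ and $s'$ be a suffix of length $> w$ of some $d' \in D$. I will show that this forces $d'$ to contain an occurrence of an element of $E'$ that is neither its length-$w$ prefix nor its length-$w$ suffix, violating the defining property of $D$.

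The key observation is that the last $w$ characters of any $s \in S$ must already lie in $E'$, because those characters coincide with the last $w$ characters of the $d \in D$ of which $s$ is a suffix, and that tail is, by the definition of $D$, the unique length-$w$ proper suffix of $d$ lying in $E'$. Once this is established, since $s$ is a proper prefix of $s'$ with $|s| > w$ and $|s'| > |s|$, these last $w$ characters of $s$ occur inside $s'$ at positions $|s|-w, \ldots, |s|-1$. Because $|s| > w$, this occurrence does not start at position $0$ of $s'$, and because $|s| < |s'|$, it does not end at position $|s'|-1$ either.

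Now I transfer this occurrence from $s'$ into $d'$: since $s'$ is a suffix of $d'$, the occurrence sits at positions $|d'|-|s'|+|s|-w, \ldots, |d'|-|s'|+|s|-1$ of $d'$. A brief calculation shows it cannot coincide with the length-$w$ prefix of $d'$ (that would need $|d'|=|s'|-|s|+w$, which combined with $|d'|\geq |s'|$ yields $|s|\leq w$, a contradiction) nor with the length-$w$ suffix of $d'$ (that would need $|s|=|s'|$, contradicting properness of the prefix relation). Hence this is an occurrence of an element of $E'$ in $d'$ that is neither its distinguished prefix nor its distinguished suffix, contradicting the ``no other substring of $d'$ is in $E'$'' clause in the definition of $D$.

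The main obstacle is the bookkeeping in the last step: carefully choosing the indices so that the occurrence in $d'$ is provably ``interior'' (i.e., strictly not the prefix or the suffix). Everything else is essentially immediate from the four defining conditions on $D$ and the fact that suffixes in $S$ have length strictly greater than $w$.
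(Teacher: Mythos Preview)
Your proposal is correct and follows essentially the same approach as the paper's proof: both argue by contradiction, observe that the last $w$ characters of $s$ form an element of $E'$ occurring strictly in the interior of $s'$, and conclude that any $d' \in D$ having $s'$ as a suffix would violate the ``no other substring in $E'$'' clause. Your version is simply more explicit about the index arithmetic verifying that the offending occurrence in $d'$ cannot coincide with its length-$w$ prefix or suffix, whereas the paper states this more tersely.
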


\begin{proof}
If $s \in S$ were a proper prefix of $s' \in S$ then, since $|s| > w$, the last $w$ characters of $s$ --- which are an element of $E'$ --- would be a substring of $s'$ but neither a proper prefix nor a proper suffix of $s'$.  Therefore, any element of $D$ with $s'$ as a suffix would contain at least three substrings in $E'$, contrary to the definition of $D$.
\end{proof}

\begin{lemma}
\label{lem:comparable}
Suppose $s, s' \in S$ and $s \prec s'$.  Then $s x \prec s' x'$ for any strings $x, x' \in (\Sigma \cup \{\#, \$\})^*$.
\end{lemma}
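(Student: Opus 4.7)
The plan is to reduce the claim to the previous lemma. The statement says that once two elements of $S$ are comparable, the comparison is \emph{decisive}: no matter what we append on the right, the order is preserved. In general, this is true for any two strings neither of which is a prefix of the other, and Lemma~\ref{lem:prefix-free} is exactly what rules out the prefix case inside $S$.

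First I would invoke Lemma~\ref{lem:prefix-free} to assert that $s$ is not a proper prefix of $s'$ (and vice versa). Then, combining this with $s \prec s'$, I would argue that there must exist a smallest index $i < \min(|s|, |s'|)$ such that $s[0..i-1] = s'[0..i-1]$ and $s[i] \prec s'[i]$; otherwise the lexicographic inequality $s \prec s'$ would force $s$ to be a proper prefix of $s'$, contradicting prefix-freeness.

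Once such an $i$ is identified, the conclusion is immediate: for any $x, x' \in (\Sigma \cup \{\#,\$\})^*$, the concatenations $sx$ and $s'x'$ also agree on their first $i$ characters and differ at position $i$ in exactly the same way as $s$ and $s'$ do, so $sx \prec s'x'$. I would also note that the alphabet extension by $\#$ and $\$$ does not affect the argument, since the comparison at position $i$ involves characters already present in $s$ and $s'$, independent of what $x$ and $x'$ contain.

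There is essentially no hard step here; the only thing to be careful about is ruling out the prefix case, which is precisely where Lemma~\ref{lem:prefix-free} is used. The rest is just the elementary observation that lexicographic order on strings of incomparable prefixes is determined by their first point of disagreement, and therefore stable under right-concatenation.
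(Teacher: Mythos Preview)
Your proposal is correct and follows essentially the same approach as the paper: invoke Lemma~\ref{lem:prefix-free} to rule out the proper-prefix case, observe that $s \prec s'$ then forces a first point of disagreement within the first $\min(|s|,|s'|)$ characters, and conclude that this disagreement persists under any right-concatenation. The paper's version is just a more compressed statement of the same argument.
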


\begin{proof}
By Lemma~\ref{lem:prefix-free}, $s$ and $s'$ are not proper prefixes of each other.  Since they are not equal either (because $s \prec s'$), it follows that $s x$ and $s' x'$ differ on one of their first $\min (|s|, |s'|)$ characters.  Therefore, $s \prec s'$ implies $s x \prec s' x'$.
\end{proof}

\begin{lemma}\label{lem:mapping}
For any suffix $x$ of $\mathtt{\#}\,T\,\mathtt{\$}^w$ with $|x| > w$, exactly one prefix $s$ of $x$ is in $S$.
\end{lemma}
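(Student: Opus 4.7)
The plan is to split the claim into existence and uniqueness, treating uniqueness as an immediate corollary of Lemma~\ref{lem:prefix-free} and devoting the real work to existence.

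For uniqueness I would simply note that if $s, s' \in S$ are both prefixes of $x$ then, taking $|s| \le |s'|$ without loss of generality, $s$ is a prefix of $s'$, so prefix-freeness of $S$ forces $s = s'$. For existence I would unpack the iterative construction of $D$ described just before the lemma. Let $p_0 = 0 < p_1 < \cdots < p_m = n+1$ be the starting positions in $\#\,T\,\$^w$ of the consecutive occurrences of elements of $E'$, with $p_0$ the position of $\#$, $p_m$ the position of $\$^w$, and the intermediate $p_k$'s the starts of length-$w$ windows in $E$. Then the phrases of the parse are exactly the strings $d_k = \#\,T\,\$^w[p_k..p_{k+1}+w-1]$ for $k = 0, \ldots, m-1$, each of which belongs to $D$. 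If $x$ begins at position $i$, the hypothesis $|x| > w$ yields $i < p_m$, and combined with $p_0 = 0 \le i$ there is a unique $k$ with $p_k \le i < p_{k+1}$. The candidate prefix is then $s := \#\,T\,\$^w[i..p_{k+1}+w-1]$, a suffix of $d_k$ of length $p_{k+1} + w - i > w$, so $s \in S$, and $s$ is visibly a prefix of $x$.

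The main obstacle I anticipate is the $w$-character overlap between consecutive phrases: a position $i \in [p_k, p_k + w - 1]$ is contained in both $d_{k-1}$ and $d_k$, so one must rule out $d_{k-1}$ as an alternative source of a prefix of $x$ in $S$. The resolution is baked into the length restriction in the definition of $S$: the suffix of $d_{k-1}$ starting at such an $i$ has length only $p_k + w - i \le w$, which fails the ``length greater than $w$'' criterion, so the inequality $p_k \le i < p_{k+1}$ selects $d_k$ unambiguously. Once this is pinned down, the existence argument is mechanical and uniqueness from Lemma~\ref{lem:prefix-free} closes the proof.
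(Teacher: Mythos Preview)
Your proposal is correct and follows essentially the same route as the paper: the paper identifies the phrase $d$ stretching from the last $E'$-occurrence starting at or before the start of $x$ to the end of the first $E'$-occurrence starting strictly after it (your $d_k$ with $p_k \le i < p_{k+1}$), takes $s$ to be the suffix of $d$ starting where $x$ starts, and then invokes Lemma~\ref{lem:prefix-free} for uniqueness. Your explicit enumeration of trigger positions $p_0 < \cdots < p_m$ is just a more notationally concrete rendering of the same construction; the overlap discussion you flag as the ``main obstacle'' is not actually needed once uniqueness is delegated to Lemma~\ref{lem:prefix-free}, but it does no harm.
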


\begin{proof}
Consider the substring $d$ stretching from the beginning of the last occurrence of an element of $E'$ that starts before or at the starting position of $x$, to the end of the first occurrence of an element of $E'$ that starts strictly after the starting position of $x$.  Regardless of whether $d$ starts with $\mathtt{\#}$ or another element of $E'$, it is prefixed by exactly one element of $E'$; similarly, it is suffixed by exactly one element of $E'$.  It follows that $d$ is an element of $D$.  Let $s$ be the prefix of $x$ that ends at the end of that occurrence of $d$ in $\mathtt{\#}\,T\,\mathtt{\$}^w$, so $|s| > w$ and is a suffix of an element of $D$ and thus $s \in S$.  By Lemma~\ref{lem:prefix-free}, no other prefix of $x$ is in $S$.
\end{proof}

Because of Lemma~\ref{lem:mapping}, we can define a function~$f$ mapping each suffix $x$ of $\mathtt{\#}\,T\,\mathtt{\$}^w$ with $|x| > w$ to the unique prefix $s$ of $x$ with $s \in S$.

\begin{lemma}
\label{lem:truncate}
Let $x$ and $x'$ be suffixes of $\mathtt{\#}\,T\,\mathtt{\$}^w$ with $|x|, |x'| > w$.  Then $f (x) \prec f (x')$ implies $x \prec x'$.
\end{lemma}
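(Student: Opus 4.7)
The plan is to reduce this essentially immediately to Lemma~\ref{lem:comparable}. By Lemma~\ref{lem:mapping}, both $x$ and $x'$ admit unique $S$-prefixes, so I would start by writing $x = f(x)\,y$ and $x' = f(x')\,y'$, where $y$ and $y'$ are the (possibly empty) remaining suffixes of $x$ and $x'$ after deleting the $S$-prefix. Since $x$ and $x'$ are suffixes of $\mathtt{\#}\,T\,\mathtt{\$}^w$, both $y$ and $y'$ lie in $(\Sigma \cup \{\mathtt{\#}, \mathtt{\$}\})^*$, so they are legal inputs for Lemma~\ref{lem:comparable}.

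Now I would simply invoke Lemma~\ref{lem:comparable} with $s = f(x)$, $s' = f(x')$, and the free strings taken to be $y$ and $y'$. The hypothesis $f(x) \prec f(x')$ is exactly what Lemma~\ref{lem:comparable} requires, and its conclusion gives $f(x)\,y \prec f(x')\,y'$, i.e.\ $x \prec x'$.

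There is really no hard step here: the work has already been done in Lemma~\ref{lem:prefix-free} (to ensure neither $f(x)$ nor $f(x')$ is a prefix of the other) and Lemma~\ref{lem:comparable} (to lift a strict inequality between two $S$-elements to a strict inequality between their extensions). The only thing worth double-checking when writing the proof is that the decomposition $x = f(x)\,y$ is well-defined, which is guaranteed by the uniqueness part of Lemma~\ref{lem:mapping}, and that the statement of Lemma~\ref{lem:comparable} does allow $y$ or $y'$ to be empty (which it does, since it quantifies over all $x, x' \in (\Sigma \cup \{\mathtt{\#}, \mathtt{\$}\})^*$).
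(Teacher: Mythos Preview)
Your proposal is correct and follows essentially the same approach as the paper: both arguments reduce directly to Lemma~\ref{lem:comparable} by observing that $f(x)$ and $f(x')$ are prefixes of $x$ and $x'$ (with lengths $> w$), so the strict order on the $S$-prefixes extends to the full suffixes. The paper's version is simply terser and does not spell out the decomposition $x = f(x)\,y$ explicitly.
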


\begin{proof}
By the definition of $f$, $f (x)$ and $f (x')$ are prefixes of $x$ and $x'$ with $|f (x)|, |f (x')| > w$.  Therefore, $f (x) \prec f (x')$ implies $x \prec x'$ by Lemma~\ref{lem:comparable}.
\end{proof}

Define $T' [0..n] = T\,\mathtt{\$}$.  Let $g$ be the function that maps each suffix $y$ of $T'$ to the unique suffix $x$ of $\mathtt{\#}\,T\,\mathtt{\$}^w$ that starts with $y$, except that it maps $T' [n] = \mathtt{\$}$ to $\mathtt{\#}\,T\,\mathtt{\$}^w$.  Notice that $g (y)$ always has length greater than $w$, so it can be given as an argument to $f$.

\begin{lemma}
\label{lem:permutation}
The permutation that lexicographically sorts $T [0..n - 1]\,\mathtt{\$}^w, \ldots, T [n - 1]\,\mathtt{\$}^w, \mathtt{\#}\,T\,\mathtt{\$}^w$ also lexicographically sorts $T' [0..n], \ldots, T' [n - 1..n], T' [n]$.
\end{lemma}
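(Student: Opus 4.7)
My plan is to pair up List~A and List~B element-wise via the obvious correspondence --- match $T[i..n-1]\,\mathtt{\$}^w$ with $T'[i..n]$ for $0 \le i \le n-1$, and match $\mathtt{\#}\,T\,\mathtt{\$}^w$ with $T'[n] = \mathtt{\$}$ --- and then show that corresponding pairs are ordered the same way in the two lists. Since a sorting permutation is determined entirely by pairwise orderings, preserving every such comparison forces the two sorting permutations to coincide.

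First I would dispose of the index-$n$ entry. Because $\mathtt{\#}$ is smaller than every character in $\Sigma$, the string $\mathtt{\#}\,T\,\mathtt{\$}^w$ is lexicographically less than each $T[i..n-1]\,\mathtt{\$}^w$, whose first character lies in $\Sigma$; analogously $T'[n] = \mathtt{\$}$ is less than each $T'[i..n] = T[i..n-1]\,\mathtt{\$}$ since $\mathtt{\$} \prec T[i]$. So in both lists this entry occupies the first slot, and it remains only to check the relative order of the other $n$ entries.

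For that, fix $0 \le i < j \le n-1$ and put $k = n - j$ (the length of the shorter text portion). If $T[i..i+k-1] \ne T[j..n-1]$, then the earliest position of disagreement lies strictly inside both text contents and the comparison is settled by two $\Sigma$-characters --- no $\mathtt{\$}$-padding is ever examined, so the outcome is the same in List~A as in List~B. Otherwise $T[j..n-1]$ is a prefix of the text portion of $T[i..n-1]$, and the next comparison pits $T[i+k] \in \Sigma$ against the first padding $\mathtt{\$}$; since $\mathtt{\$} \prec T[i+k]$, the $j$-indexed string is the smaller one, both in List~A and in List~B. This exhausts the cases.

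The main (and essentially the only) obstacle is the mismatched padding lengths: List~A pads with $w$ copies of $\mathtt{\$}$ while List~B pads with one. The case split above resolves this by observing that at most one padding $\mathtt{\$}$ is ever consulted before a decision is reached, so the extra $w-1$ copies in List~A can never affect the outcome. The same analysis also shows that all entries of List~A (and of List~B) are pairwise distinct, so ``the permutation that sorts'' is unambiguously defined for each list.
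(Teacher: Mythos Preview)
Your proof is correct and follows essentially the same approach as the paper: handle the index-$n$ entry separately (both $\mathtt{\#}\,T\,\mathtt{\$}^w$ and $T'[n]=\mathtt{\$}$ are the minima of their respective lists), and for the remaining $n$ entries observe that the extra $\mathtt{\$}$-padding in List~A never affects a comparison. The paper compresses your case analysis into the single remark that ``appending copies of $\mathtt{\$}$ to the suffixes of $T'$ does not change their relative order,'' whereas you spell out explicitly why at most one padding character is ever consulted; your version is more careful (and also notes distinctness, which the paper leaves implicit), but the underlying idea is the same.
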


\begin{proof}
Appending copies of $\mathtt{\$}$ to the suffixes of $T'$ does not change their relative order, and just as $\mathtt{\#}\,T\,\mathtt{\$}^w$ is the lexicographically smallest of $T [0..n - 1]\,\mathtt{\$}^w, \ldots, T [n - 1]\,\mathtt{\$}^w, \mathtt{\#}\,T\,\mathtt{\$}^w$, so $T' [n] = \mathtt{\$}$ is the lexicographically smallest of $T' [0..n], \ldots, T' [n - 1..n], T' [n]$.
\end{proof}

Let $\beta$ be the function that, for $i < n$, maps $T' [i]$ to the lexicographic rank of $f (g (T' [i + 1..n]))$ in $S$, and maps $T [n]$ to the lexicographic rank of $f (g (T')) = f (T\,\mathtt{\$}^w)$.  

\begin{lemma}
\label{lem:subsequence}
Suppose $\beta$ maps $k$ copies of $a$ to $s \in S$ and maps no other characters to $s$, and maps a total of $t$ characters to elements of $S$ lexicographically less than $s$.  Then the $(t + 1)$st through $(t + k)$th characters of the BWT of $T'$ are copies of $a$.
\end{lemma}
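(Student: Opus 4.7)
The plan is to show that the BWT of $T'$ lists characters in an order compatible with the ranking that $\beta$ assigns. By the definition of the BWT and by Lemma~\ref{lem:permutation}, the BWT of $T'$ is obtained by sorting the suffixes $T'[0..n], T'[1..n], \ldots, T'[n]$ lexicographically and recording the cyclically preceding character of each. For $i < n$ the preceding character of the suffix $T'[i+1..n]$ is $T'[i]$, and the cyclically preceding character of $T' = T'[0..n]$ is $T'[n]$. Under the map $g$, these suffixes of $T'$ correspond to suffixes of $\mathtt{\#}\,T\,\mathtt{\$}^w$ of length greater than $w$, and by definition $\beta(T'[i])$ is the lex rank in $S$ of $f$ applied to the corresponding image under $g$.

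The central step is to apply Lemma~\ref{lem:truncate}: if $f(g(y)) \prec f(g(y'))$ for two suffixes $y, y'$ of $T'$, then $g(y) \prec g(y')$, and then $y \prec y'$ by Lemma~\ref{lem:permutation}. Hence the true lex order on the $n+1$ suffixes of $T'$ refines the order induced by $\beta$. It follows that in the BWT, the first $t$ positions are occupied by those characters whose $\beta$-value corresponds to an element of $S$ strictly less than $s$, and the next $k$ positions are occupied by exactly the characters whose $\beta$-value corresponds to $s$ itself. By hypothesis all $k$ of these latter characters are copies of $a$, so positions $t+1$ through $t+k$ of the BWT of $T'$ are all copies of $a$, as required.

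The main point to watch will be the special handling of $T'[n] = \mathtt{\$}$ and the cyclic wraparound: $\beta$ treats $T'[n]$ via $g(T') = T\,\mathtt{\$}^w$ rather than via a suffix beginning one position past it, but this is precisely what is needed so that the cyclically preceding character of $T'$ itself — namely $T'[n]$ — lands in the BWT position indexed by the suffix $T'$. A small but essential auxiliary remark is that ties under $\beta$ do not need to be broken: if several distinct suffixes of $T'$ all map via $f \circ g$ to the same $s \in S$, their relative order within the corresponding block of $k$ positions of the BWT is immaterial, because all $k$ preceding characters are equal to $a$ by hypothesis.
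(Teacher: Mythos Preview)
Your proof is correct and follows essentially the same approach as the paper: both invoke Lemmas~\ref{lem:truncate} and~\ref{lem:permutation} to conclude that $f(g(y)) \prec f(g(y'))$ implies $y \prec y'$, so that the BWT order on the suffixes of $T'$ refines the partial order given by $\beta$, and hence the $k$ characters mapped to $s$ occupy positions $t+1$ through $t+k$ and are all equal to $a$. Your version is more explicit about the cyclic wraparound for $T'[n]$ and about why ties under $\beta$ are harmless, but the underlying argument is the same.
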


\begin{proof}
By Lemmas~\ref{lem:truncate} and~\ref{lem:permutation}, if $f (g (y)) \prec f (g (y'))$ then $y \prec y'$.  Therefore, $\beta$ partially sorts the characters in $T'$ into their order in the BWT of $T'$; equivalently, the characters' partial order according to $\beta$ can be extended to their total order in the BWT.  Since every total extension of $\beta$ puts those $k$ copies of $a$ in the $(t + 1)$st through $(t + k)$th positions, they appear there in the BWT.
\end{proof}

From $D$ and $P$, we can compute how often each element $s \in S$ is preceded by each distinct character $a$ in $\mathtt{\#}\,T\,\mathtt{\$}^w$ or, equivalently, how many copies of $a$ are mapped by $\beta$ to the lexicographic rank of $s$.  If an element $s \in S$ is a suffix of only one element $d \in D$ and a proper suffix of that --- which we can determine first from $D$ alone --- then $\beta$ maps only copies of the the preceding character of $d$ to the rank of $s$, and we can compute their positions in the BWT of $T'$.  If $s = d$ or a suffix of several elements of $D$, however, then $\beta$ can map several distinct characters to the rank of $s$.  To deal with these cases, we can also compute which elements of $D$ contain which characters mapped to the rank of $s$.  We will explain in a moment how we use this information.

For our example, $T = \mbox{\tt GATTACAT!GATACAT!GATTAGATA}$, we compute the information shown in Table~\ref{tab:example}.  To ease the comparison to the standard computation of the BWT of $T'\,\mathtt{\$}$, shown in Table~\ref{tab:bwt}, we write the characters mapped to each element $s \in S$ before $s$ itself.

By Lemma~\ref{lem:subsequence}, from the characters mapped to each rank by $\beta$ and the partial sums of frequencies with which $\beta$ maps characters to the ranks, we can compute the subsequence of the BWT of $T'$ that contains all the characters $\beta$ maps to elements of $S$, which are not complete elements of $D$ and to which only one distinct character is mapped.  We can also leave placeholders where appropriate for the characters $\beta$ maps to elements of $S$, which are complete elements of $D$ or to which more than one distinct character is mapped.  For our example, this subsequence is {\tt ATTTTTTCCGGGGAAA!\$!AAA\,-\,-\,TAA}.  Notice we do not need all the information in $P$ to compute this subsequence, only $D$ and the frequencies of its elements in $P$.

\begin{table}[t!]
\begin{center}
\caption{The information we compute for our example, $T = \mbox{\tt GATTACAT!GATACAT!GATTAGATA}$.  Each line shows the lexicographic rank $r$ of an element $s \in S$; the characters mapped to $r$ by $\beta$; $s$ itself; the elements of $D$ from which the mapped characters originate; the total frequency with which characters are mapped to $r$; and the preceding partial sum of the frequencies.}
\label{tab:example}
\tt
\begin{tabular}{cclccc}
&&&&\\[1ex]
           & {\rm mapped}     &                                &                                 &                 & {\rm preceding}\\
{\rm rank} & {\rm characters} & \multicolumn{1}{c}{\rm suffix} & \multicolumn{1}{c}{\rm sources} & {\rm frequency} & {\rm partial sum}\\
\hline\\[-2ex]
$0$  & A       & \#GATTAC  & $1$    & $1$ &  $0$\\
$1$  & T       & !GATAC    & $2$    & $1$ &  $1$\\
$2$  & T       & !GATTAG   & $3$    & $1$ &  $2$\\
$3$  & T       & A\$\$     & $5$    & $1$ &  $3$\\
$4$  & T       & ACAT!     & $4$    & $2$ &  $4$\\
$5$  & T       & AGATA\$\$ & $5$    & $1$ &  $6$\\
$6$  & C       & AT!       & $4$    & $2$ &  $7$\\
$7$  & G       & ATA\$\$   & $5$    & $1$ &  $9$\\
$8$  & G       & ATAC      & $2$    & $1$ & $10$\\
$9$  & G       & ATTAC     & $1$    & $1$ & $11$\\
$10$ & G       & ATTAG     & $3$    & $1$ & $12$\\
$11$ & A       & CAT\#     & $4$    & $2$ & $13$\\
$12$ & A       & GATA\$\$  & $5$    & $1$ & $15$\\
$13$ & !       & GATAC     & $2$    & $1$ & $16$\\
$14$ & \$      & GATTAC    & $1$    & $1$ & $17$\\
$15$ & !       & GATTAG    & $3$    & $1$ & $18$\\
$16$ & A       & T!GATAC   & $2$    & $1$ & $19$\\
$17$ & A       & T!GATTAG  & $3$    & $1$ & $20$\\
$18$ & A       & TA\$\$    & $5$    & $1$ & $21$\\
$19$ & T$,$\,A & TAC       & $1; 2$ & $2$ & $22$\\
$20$ & T       & TAG       & $3$    & $1$ & $24$\\
$21$ & A       & TTAC      & $1$    & $1$ & $25$\\
$22$ & A       & TTAG      & $3$    & $1$ & $26$
\end{tabular}
\end{center}
\end{table}

\begin{table}[t!]
\begin{center}
\caption{The BWT for $T' = \mbox{\tt GATTACAT!GATACAT!GATTAGATA\$}$.  Each line shows a position in the BWT; the character in that position; and the suffix immediately following that character in $T'$.}
\label{tab:bwt}
\tt
\begin{tabular}{ccl}
&&\\[1ex]
$i$ & $\BWT [i]$ & \multicolumn{1}{c}{\rm suffix}\\
\hline\\[-2ex]
 $0$ & A  & \$\\
 $1$ & T  & !GATACAT!GATTAGATA\$\\
 $2$ & T  & !GATTAGATA\$\\
 $3$ & T  & A\$\\
 $4$ & T  & ACAT!GATACAT!GATTAGATA\$\\
 $5$ & T  & ACAT!GATTAGATA\$\\
 $6$ & T  & AGATA\$\\
 $7$ & C  & AT!GATACAT!GATTAGATA\$\\
 $8$ & C  & AT!GATTAGATA\$\\
 $9$ & G  & ATA\$\\
$10$ & G  & ATACAT!GATTAGATA\$\\
$11$ & G  & ATTACAT!GATACAT!GATTAGATA\$\\
$12$ & G  & ATTAGATA\$\\
$13$ & A  & CAT!GATACAT!GATTAGATA\$\\
$14$ & A  & CAT!GATTAGATA\$\\
$15$ & A  & GATA\$\\
$16$ & !  & GATACAT!GATTAGATA\$\\
$17$ & \$ & GATTACAT!GATACAT!GATTAGATA\$\\
$18$ & !  & GATTAGATA\$\\
$19$ & A  & T!GATACAT!GATTAGATA\$\\
$20$ & A  & T!GATTAGATA\$\\
$21$ & A  & TA\$\\
$22$ & T  & TACAT!GATACAT!GATTAGATA\$\\
$23$ & A  & TACAT!GATTAGATA\$\\
$24$ & T  & TAGATA\$\\
$25$ & A  & TTACAT!GATACAT!GATTAGATA\$\\
$26$ & A  & TTAGATA\$\\
\end{tabular}
\end{center}
\end{table}

Suppose $s \in S$ is an entire element of $D$ or a suffix of several elements of $D$, and occurrences of $s$ are preceded by several distinct characters in $\mathtt{\#}\,T\,\mathtt{\$}^w$, so $\beta$ assigns $s$'s lexicographic rank in $S$ to several distinct characters.  To deal with such cases, we can sort the suffixes of the parse $P$ and apply the following lemma.

\begin{lemma}
\label{lem:parse}
Consider two suffixes $t$ and $t'$ of $\mathtt{\#}\,T\,\mathtt{\$}^w$ starting with occurrences of $s \in S$, and let $q$ and $q'$ be the suffixes of $P$ encoding the last $w$ characters of those occurrences of $s$ and the remainders of $t$ and $t'$.  If $t \prec t'$ then $q \prec q'$.
\end{lemma}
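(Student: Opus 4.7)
The plan is to locate the first index at which the phrase-rank sequences $q$ and $q'$ disagree and argue that the direction of this rank comparison agrees with the direction of the character comparison between $t$ and $t'$. Since $|s| > w$ and $s \in S$, the last $w$ characters of $s$ form an element of $E'$, so a phrase of the parse begins at that position in both $t$ and $t'$; writing the $i$-th phrase of $q$ and $q'$ as $d_i$ and $d_i'$, the first phrases $d_1, d_1'$ both start with this element of $E'$, and more generally whenever $d_i = d_i'$ the successor phrases $d_{i+1}, d_{i+1}'$ both start with the element of $E'$ given by the last $w$ characters of $d_i$.

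The key combinatorial fact I would establish next is that two distinct elements of $D$ that share a starting element of $E'$ cannot be prefix-related. If $d$ were a proper prefix of $d'$ and both began with the same element of $E'$, then the ending element of $E'$ of $d$ (which every element of $D$ has by definition) would appear inside $d'$ strictly between $d'$'s starting and ending elements of $E'$, violating the requirement that no other substring of $d'$ lies in $E'$.

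Next I would pick the smallest index $k$ at which the rank sequences of $q$ and $q'$ disagree, after ruling out the possibility that no such $k$ exists. The final phrase of $P$ contains $\mathtt{\$}^w$, which occurs only at the very end of $\mathtt{\#}\,T\,\mathtt{\$}^w$, so that phrase is the unique element of $D$ containing $\mathtt{\$}^w$, and its rank appears exactly once in $P$. Hence if the shorter of $q, q'$ fully matched a prefix of the longer, the uniqueness of that final rank would force the two lengths to coincide and so $q = q'$; expanding the parse would then give $t = t'$, contradicting $t \prec t'$.

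At the chosen index $k$, the phrases $d_{k+1}$ and $d_{k+1}'$ share a starting element of $E'$ (the last $w$ characters of the common preceding phrase, or of $s$ when $k = 0$) but are distinct, so by the key fact they are not prefix-related and first differ at some internal character position $j \geq w$. The shared prefix $s$ together with the $k$ matched phrases (with their $w$-character overlaps correctly bookkept) aligns all earlier characters of $t$ and $t'$, so position $j$ within the $(k+1)$-st phrase is precisely the first position at which $t$ and $t'$ disagree. Because phrase ranks in $D$ are assigned by lexicographic order of phrases as strings, $t \prec t'$ is equivalent to $d_{k+1} \prec d_{k+1}'$, hence to $\mathrm{rank}(d_{k+1}) < \mathrm{rank}(d_{k+1}')$, and since $q$ and $q'$ agree on their first $k$ characters this is in turn equivalent to $q \prec q'$. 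The main obstacle I expect is the combinatorial fact about prefix-relations in $D$; the remainder is careful bookkeeping of offsets, overlaps, and the definition of ranks.
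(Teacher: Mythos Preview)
Your proposal is correct and follows essentially the same idea as the paper's proof: locate the first point of disagreement and use the fact that phrases are encoded by their lexicographic ranks to transfer the comparison from $t,t'$ to $q,q'$. The paper's version is extremely terse---it simply observes that $s$ (occurring twice) cannot end in $\mathtt{\$}^w$, so $t$ and $t'$ extend beyond $s$ and differ at some first character, which then ``forces $q \prec q'$''---whereas you carefully supply the combinatorics the paper leaves implicit (that distinct phrases sharing a length-$w$ prefix in $E'$ cannot be prefix-related, and that the unique final phrase rules out one of $q,q'$ being a prefix of the other).
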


\begin{proof}
Since $s$ occurs at least twice in $\mathtt{\#}\,T\,\mathtt{\$}^w$, it cannot end with $\mathtt{\$}^w$ and thus cannot be a suffix of $\mathtt{\#}\,T\,\mathtt{\$}^w$.  Therefore, there is a first character on which $t$ and $t'$ differ.  Since the elements of $D$ are represented in the parse by their lexicographic ranks, that character forces $q \prec q'$.
\end{proof}

We consider the occurrences in $P$ of the elements of $D$ suffixed by $s$, and sort the characters preceding those occurrences of $s$ into the lexicographic order of the remaining suffixes of $P$ which, by Lemma~\ref{lem:parse}, is their order in the BWT of $T'$.  In our example, $\mathtt{TAC} \in S$ is preceded in $\mathtt{\#}\,T\,\mathtt{\$\$}$ by a {\tt T} when it occurs as a suffix of $\mathtt{\#GATTAC} \in D$, which has rank 0 in $D$, and by an {\tt A} when it occurs as a suffix of $\mathtt{T!GATAC} \in D$, which has rank 3 in $D$.  Since the suffix following 0 in $P = 0, 1, 3, 1, 4, 2$ is lexicographically smaller than the suffix following 3, that {\tt T} precedes that {\tt A} in the BWT.

Since we need only $D$ and the frequencies of its elements in $P$ to apply Lemma~\ref{lem:subsequence} to build and store the subsequence of the BWT of $T'$ that contains all the characters $\beta$ maps to elements of $S$, to which only one distinct character is mapped, this takes space proportional to the total length of the elements of $D$.  We can then apply Lemma~\ref{lem:parse} to build the subsequence of missing characters in the order they appear in the BWT.  Although this subsequence of missing characters could take more space than $D$ and $P$ combined, as we generate them we can interleave them with the first subsequence and output them, thus still using workspace proportional to the total length of $P$ and the elements of $D$ and only one pass over the space used to store the BWT.

Notice, we can build the first subsequence from $D$ and the frequencies of its elements in $P$; store it in external memory; and make a pass over it while we generate the second one from $D$ and $P$, inserting the missing characters in the appropriate places.  This way we use two passes over the space used to store the BWT, but we may use significantly less workspace.

Summarizing, assuming we can recognize the strings in $E$ quickly, we can quickly compute $D$ and $P$ with one scan over $T$. From $D$ and $P$, with Lemmas~\ref{lem:subsequence} and~\ref{lem:parse}, we can compute the BWT of $T' = T\,\mathtt{\$}$ by sorting the suffixes of the elements of $D$ and the suffixes of $P$.  Since there are linear-time and linear-space algorithms for sorting suffixes when working in internal memory, this implies our main theoretical result:

\begin{theorem}
\label{thm:pfp}
We can compute the BWT of $T\,\mathtt{\$}$ from $D$ and $P$ using workspace proportional to sum of the total length of $P$ and the elements of $D$, and $O (n)$ time when we can work in internal memory.
\end{theorem}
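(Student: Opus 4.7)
The plan is to assemble the preceding lemmas into an explicit streaming algorithm whose time is $O(n)$ and whose workspace matches the claimed bound. First I would construct the sorted set $S$ by concatenating the elements of $D$ with fresh separators smaller than any symbol of $\Sigma\cup\{\mathtt{\#},\mathtt{\$}\}$, building the suffix array of that concatenation with a standard linear-time suffix sorter, and reading off the suffixes of length greater than $w$ that lie entirely within a single element; by Lemma~\ref{lem:prefix-free} each element of $S$ appears exactly once, so the lex order on $S$ is directly available. I would likewise build the suffix array of $P$, viewed as a string over the integer alphabet $[0,|D|)$, in time linear in $|P|$. Alongside the traversal of $S$, I would compute the bookkeeping of Table~\ref{tab:example}: for each $s\in S$, the list of elements of $D$ that have $s$ as a suffix, the character of each such source preceding $s$, the per-source frequency in $P$, and the running partial sum of frequencies.

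With these data in hand, I would emit the BWT of $T\,\mathtt{\$}$ in rank-of-$S$ order. For an ``easy'' $s$---one to which $\beta$ maps only one distinct character---Lemma~\ref{lem:subsequence} lets me output that character with the multiplicity dictated by the partial-sum counts, using only $D$ together with the element frequencies of $P$. The \emph{main obstacle} is the ``hard'' case, where $s$ is an entire element of $D$ or a proper suffix of several elements, so that $\beta$ maps several distinct characters to the rank of $s$. For each hard $s$ I must order the positions of $P$ at which an element of $D$ with $s$ as a suffix occurs by the lexicographic order of the $P$-suffix beginning one position later, so that, by Lemma~\ref{lem:parse}, the associated preceding characters appear in the BWT in the correct order. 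Rather than sort each hard $s$'s positions separately, I would make a single pass through the suffix array of $P$ and, at each position visited in sorted order, enumerate the hard suffixes of the preceding phrase and append the corresponding characters to the BWT intervals allocated to those suffixes; this amortizes the ordering work so that its total cost is proportional to the sum of the phrase lengths of $P$, which is $O(n)$.

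Putting the pieces together, the two suffix-array constructions are linear in the total length of the elements of $D$ and in $|P|$, respectively; the easy-case emission is linear in the number of easy BWT positions; and the hard-case pass is $O(n)$. All auxiliary structures---sorted $S$, the two suffix arrays, frequency tables, and source lists---fit in workspace proportional to the total length of the elements of $D$ plus $|P|$, matching the stated bound. Generating the easy subsequence into a buffer and then making a synchronised pass that inserts the hard-case characters (as sketched in the discussion preceding the theorem) produces the complete BWT in a single streaming output pass. Correctness follows from Lemmas~\ref{lem:subsequence} and~\ref{lem:parse}, which jointly account for every position of the BWT of $T\,\mathtt{\$}$.
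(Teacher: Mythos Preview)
Your proposal is correct and follows essentially the same approach as the paper: build the suffix array of the concatenated dictionary to obtain $S$ in sorted order, build the suffix array of $P$, handle the ``easy'' ranks via Lemma~\ref{lem:subsequence} using only $D$ and the phrase frequencies, and resolve the ``hard'' ranks via Lemma~\ref{lem:parse} using the sorted suffixes of $P$. The paper's own proof is little more than a one-paragraph summary pointing to these two lemmas and to linear-time suffix sorting; your single pass over $\SA(P)$ that distributes characters into per-$s$ output intervals is a valid realization of the same idea (the paper's implementation section does the dual---per-$s$ merges of inverted lists---but the theorem's proof does not commit to either), and your time and workspace accounting matches the paper's.
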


The significance of the above theorem is that if the text $T$ contains many repetitions the dictionary of distinct phrases $D$ will be relatively small and, if the dictionary words are sufficiently long, also the parse $P$ will be much smaller than $T$. Thus, even if $T$ is very large, if $D$ and $P$ fit into internal memory then using Theorem~\ref{thm:pfp} we can efficiently build the BWT for $T$ doing the critical computations in RAM after a single sequential scanning of $T$ during the parsing phase. 



\section{Prefix free parsing in practice}
\label{sec:practice}

We have implemented our BWT construction based on prefix free parsing and applied it to collections of repetitive documents and genomic databases. Our purpose is to test our conjectures that 1) with a good choice of the parsing strategy the total length of the phrases in the dictionary and the number of phrases in the parse will both be small in comparison to the uncompressed size of the collection, and 2) computing the BWT from the dictionary and the parse leads to an overall speed-up and reduction in memory usage. In this section we describe our implementation and then report our experimental results.
 

%


\subsection{Implementation}
\label{subsec:implementation}

Given a window size~$w$, we select a prime $p$ and we define the set~$E$ described in Section~\ref{sec:theory}, as the set of length-$w$ strings such that their Karp-Rabin fingerprint modulo $p$ is zero. 
With this choice our parsing algorithm works as follows. We slide a window of length $w$ over the text, keeping track of the Karp-Rabin hash of the window; we also keep track of the hash of the entire prefix of the current phrase that we have processed so far.  Whenever the hash of the window is 0 modulo $p$, we terminate the current phrase at the end of the window and start the next one at the beginning of the window.  We prepend a NUL character to the first phrase and append $w$ copies of NUL to the last phrase.  If the text ends with $w$ characters whose hash is 0 modulo $p$, then we take those $w$ character to be the beginning of the last phrase and append to them $w$ copies of NUL.  We note that we prepend and append copies of the same NUL character; although using different characters simplifies the proofs in Section~\ref{sec:theory}, it is not essential in practice.

We keep track of the set of hashes of the distinct phrases in the dictionary so far, as well as the phrases' frequencies.  Whenever we terminate a phrase, we check if its hash is in that set.  If not, we add the phrase to the dictionary and its hash to the set, and set its frequency to 1; if so, we compare the current phrase to the one in the dictionary with the same hash to ensure they are equal, then increment its frequency.  (Using a 64-bit hash the probability of there being a collision is very low, so we have not implemented a recovery mechanism if one occurs.)  In both cases, we write the hash to disk.

When the parsing is complete, we have generated the dictionary $D$ and the parsing $P = p_1, p_2,\ldots, p_z$, where each phrase $p_i$ is represented by its hash.  Next, we sort the dictionary and make a pass over $P$ to substitute the phrases' lexicographic ranks for their hashes. This gives us the final parse, still on disk, with each entry stored as a 4-byte integer.  We write the dictionary to disk phrase by phrase in lexicographic order with a special end-of-phrase terminator at the end of each phrase. In a separate file we store the frequency of each phrase in as a 4-byte integer.  Using four bytes for each integer does not give us the best compression possible, but it makes it easy to process the frequency and parse files later. Finally, we write to a separate file the array $W$ of length $|P|$ such that $W[j]$ is the character of $p_j$ in position $w+1$ from the end (recall each phrase has length greater than $w$). These characters will be used to handle the elements of $S$ that are also elements of $D$.

Next, we compute the BWT of the parsing $P$, with each phrase represented by its 4-byte lexicographic rank in $D$. The computation is done using the SACA-K suffix array construction algorithm~\cite{tois/Nong13} which, among the linear time algorithms, is the one using the smallest workspace and is particularly suitable for input over large alphabets. Instead of storing $BWT(P) = b_1, b_2,\ldots, b_z$, we save the same information in a format more suitable for the next phase. We consider the dictionary phrases in lexicographic order, and, for each phrase $d_i$, we write the list of BWT positions where $d_i$ appears. We call this the inverted list for phrase $d_i$. Since the size of the inverted list of each phrase is equal to its frequency, which is available separately, we write to file the plain concatenation of the inverted lists using again four bytes per entry, for a total of $4|P|$ bytes. In this phase we also permute the elements of $W$ so that now $W[j]$ is the character coming from the phrase that precedes $b_j$ in the parsing, i.e. $P[SA[j]-2]$.

In the final phase of the algorithm we compute the BWT of the input~$T$. We deviate slightly from the description in Section~\ref{sec:theory} in that instead of lexicographically sorting the suffixes in $D$ larger than $w$ we sort all of them and later ignore those which are of length $\leq w$. The sorting is done applying the gSACAK algorithm~\cite{tcs/LouzaGT17} which computes the SA and LCP array for the set of dictionary phrases. We then proceed as in Section~\ref{sec:theory}. If during the scanning of the sorted set $S$ we meet $s$ which is a proper suffix of several elements of $D$ we use a heap to merge their respective inverted lists writing a character to the final BWT file every time we pop a position from the heap. If we meet $s$ which coincides with a dictionary phrase $d$ we write the characters retrieved from $W$ from the positions obtained from $d$'s inverted list. 

It turns out that the the most expensive phases of the algorithm are the first, where we compute the parsing of $T$, and the last, where we compute $BWT(T)$ from the SA of $D$ and the inverted lists for $D$'s phrases. Fortunately, both phases can be sped-up using multiple threads in parallel. To parallelize the first phase we split the input into equal size chunks, and we assign each chunk to a different thread. Using this simple approach, we obtained a speed-up of a factor 2 using four threads, but additional threads do not yield substantial improvements. There are two likely reasons for that: 1) during the parsing all threads need to update the same dictionary, and 2) each thread has to write to disk its portion of the parsing and, unless the system has multiple disks, disk access becomes a bottleneck. To parallelize the computation of the final $BWT(T)$ we use a different approach. The main thread scans the suffix array of the dictionary and as soon as it finds a range of equal suffixes it passes such range to an helper thread that computes and writes to disk the corresponding portion of $BWT(T)$. Again, we were able to reduce the running time of this phase by factor 2 using four threads. In the next section we only report the running times for the single thread algorithm since we are still working to improve our multi-thread version.

\subsection{Experiments}
\label{subsec:experiments}
In this section, the parsing and BWT computation are experimentally evaluated. All experiments were run on a server with Intel(R) Xeon(R) CPU E5-2640 v4 @ 2.40GHz and $756$ gigabytes of RAM.

Table~\ref{tab:pizzachili} shows the sizes of the dictionaries and parses for several files from the Pizza \& Chili repetitive corpus~\cite{repcorpus}, with three settings of the parameters $w$ and $p$.  We note that {\tt cere} contains long runs of {\tt N}s and {\tt world\_leaders} contains long runs of periods, which can either cause many phrases, when the hash of $w$ copies of those characters is 0 modulo $p$, or a single long phrase otherwise; we also display the sizes of the dictionaries and parses for those files with all {\tt N}s and periods removed.  The dictionaries and parses occupy between 5 and 31 percent of the space of the uncompressed files.

\begin{table}
\begin{center}
\caption{The dictionary and parse sizes for several files from the Pizza \& Chili repetitive corpus, with three settings of the parameters $w$ and $p$.  All sizes are reported in megabytes; percentages are the sums of the sizes of the dictionaries and parses, divided by the sizes of the uncompressed files.}
\label{tab:pizzachili}
\begin{tabular}{r|r|rrr|rrr|rrr}
\multicolumn{1}{c}{} & \multicolumn{1}{c}{} & \multicolumn{3}{c}{$w = 6, p = 20$} & \multicolumn{3}{c}{$w = 8, p = 50$} & \multicolumn{3}{c}{$w = 10, p = 100$}\\
file & size & dict. & parse & \% & dict. & parse & \% & dict. & parse & \% \\
\hline
{\tt cere}
     &  440 &   61 &    77 & 31 &       43 &      159 & 46       & {\bf 89} & {\bf 17} & {\bf 24} \\
{\tt cere\_no\_Ns}
     &  409 &   33 &    77 & 27 & {\bf 43} & {\bf 33} & {\bf 18} &       60 &       17 &       19 \\
{\tt dna.001.1}
     &  100 &    8 &    20 & 27 & {\bf 13} & {\bf  9} & {\bf 21} &       21 &        4 &       25 \\
{\tt einstein.en.txt}
     &  446 &    2 &    87 & 20 &        3 &       39 &        9 & {\bf  4} & {\bf 17} & {\bf  5} \\
{\tt influenza}
     &  148 &   16 &    28 & 30 & {\bf 32} & {\bf 12} & {\bf 29} &       49 &        6 &       37 \\
{\tt kernel}
     &  247 &   14 &    52 & 26 &       14 &       20 &       13 & {\bf 15} & {\bf 10} & {\bf 10} \\
{\tt world\_leaders}
     &   45 &    5 &     5 & 21 & {\bf  8} & {\bf  2} & {\bf 21} &       11 &        1 &       26 \\
{\tt world\_leaders\_no\_dots}
     &   23 &    4 &     5 & 34 & {\bf  6} & {\bf  2} & {\bf 31} &        7 &        1 &       33
\end{tabular}
\end{center}
\end{table}
Table~\ref{tab:salmonella} shows the sizes of the dictionaries and parses for prefixes of a database of Salmonella genomes~\cite{STBASBM17}.  The dictionaries and parses occupy between 14 and 44 percent of the space of the uncompressed files, with the compression improving as the number of genomes increases.  In particular, the dataset of ten thousand genomes takes nearly 50 GB uncompressed, but with $w = 10$ and $p = 100$ the dictionary and parse take only about 7 GB together, so they would still fit in the RAM of a commodity machine.  This seems promising, and we hope the compression is even better for larger genomic databases.

Table~\ref{tab:pfbwt-genome} shows the runtime and peak memory usage for computing the BWT from the parsing for the database of Salmonella genomes. As a baseline for comparison, {\tt simplebwt} computes the BWT by first computing the Suffix Array using algorithm SACA-K~\cite{tois/Nong13} which is the same tool used internally by our algorithm since it is fast and uses $O(1)$ workspace. As shown in Table~\ref{tab:pfbwt-genome}, the peak memory usage of {\tt simplebwt} is reduced by a factor of $4$ to $10$ by computing the BWT from the parsing; furthermore, the total runtime is competitive with {\tt simplebwt}. In some instances, for example the databases of $5000$, $10,000$ genomes, computing the BWT from the parsing achieved significant runtime reduction over {\tt simplebwt}; with $w = 10$, $p = 100$ on these instances the runtime reduction is more than factors of $2$ and $4$ respectively. For our BWT computations, the peak memory usage with $w = 6$, $p = 20$ stays within a factor of roughly 2 of the original file size and is smaller than the original file size on the larger databases of $1000$ genomes.


Qualitatively similar results on files from the Pizza \& Chili corpus are shown
in Table \ref{tab:pfbwt-pizza}.

\subsubsection{On the choice of the parameter $w$ and $p$}
Finally, Fig. \ref{fig:wp-exps} shows the peak memory usage and runtime for computing the BWT
on a collection of $1000$ Salmonella genomes of size $2.7$ gigabytes, for all pairs of parameter
choices $(w,p)$, where $w \in \{6,8,10\}$ and $p \in \{ 50, 100, 200, 400, 800 \}$.
As shown in Fig. \ref{fig:wp-exp-mem}, the choice $(w,p)=(10,50)$ results in the smallest
peak memory usage, while Fig. \ref{fig:wp-exp-time} shows that
$(w,p)=(10,200)$ results in the fastest runtime. In general, for either 
runtime or peak memory usage, the optimal choice of $(w,p)$ differs 
and depends on the input. However, notice that all pairs $(w,p)$ tested
here required less than $1.1$ times the input size of memory and the slowest runtime
was less than twice the fastest. 
\begin{table}
\begin{center}
\caption{The dictionary and parse sizes for prefixes of a database of Salmonella genomes, with three settings of the parameters $w$ and $p$.  Again, all sizes are reported in megabytes; percentages are the sums of the sizes of the dictionaries and parses, divided by the sizes of the uncompressed files.}
\label{tab:salmonella}
\begin{tabular}{r|r|rrr|rrr|rrr}
\multicolumn{1}{c}{number of} & \multicolumn{1}{c}{} & \multicolumn{3}{c}{$w = 6, p = 20$} & \multicolumn{3}{c}{$w = 8, p = 50$} & \multicolumn{3}{c}{$w = 10, p = 100$}\\
genomes & size & dict. & parse & \% & dict. & parse & \% & dict. & parse & \% \\
\hline
50 & 249 & 68 & 43 & 44 & {\bf 77} & {\bf 20} & {\bf 39} & 91 & 10 & 40\\
100 & 485 & 83 & 85 & 35 & {\bf 99} & {\bf 39} & {\bf 28} & 122 & 19 & 29\\
500 & 2436 & 273 & 424 & 29 & 314 & 194 & 21 & {\bf 377} & {\bf 96} & {\bf 19}\\
1000 & 4861 & 475 & 847 & 27 & 541 & 388 & 19 & {\bf 643} & {\bf 192} & {\bf 17}\\
5000 & 24936 & 2663 & 4334 & 28 & 2915 & 1987 & 20 & {\bf 3196} & {\bf 985} & {\bf 17}\\
10000 & 49420 & 4190 & 8611 & 26 & 4652 & 3939 & 17 & {\bf 5176} & {\bf 1955} & {\bf 14}
\end{tabular}
\end{center}

\bigskip

\begin{center}
\caption{Time (seconds) and peak memory consumption (megabytes) of BWT calculations for prefixes of a database of Salmonella genomes, for three settings of the parameters $w$ and $p$ and for the comparison method {\tt simplebwt}.}
\label{tab:pfbwt-genome}
\begin{tabular}{r|rr|rr|rr|rr}
\multicolumn{1}{c}{number of} & \multicolumn{2}{c}{$w = 6, p = 20$} & \multicolumn{2}{c}{$w = 8, p = 50$} & \multicolumn{2}{c}{$w = 10, p = 100$} & \multicolumn{2}{c}{\tt simplebwt}\\
genomes & time & peak  & time & peak & time & peak & time & peak \\
\hline
50    & 71     & {\bf 545}   & 63  &  642    & 65  & 782     & { \bf 53} & 2247   \\
100   & 118    & { \bf 709}  & { \bf 100}    & 837  & 102    & 1059      & 103 & 4368  \\
500   & 570    & {\bf 2519}  & 443 & 2742    & { \bf 402}    & 3304      & 565  & 21923 \\
1000  & 1155   & {\bf 4517}  & 876  & 4789   & { \bf 776}    & 5659      & 1377 & 43751 \\
5000  & 7412   & {\bf 42067} & 5436 & 46040  & { \bf 4808}   & 51848     & 11600 & 224423 \\
10000 & 19152  & {\bf 68434} & 12298 & 74500 & { \bf 10218}  & 84467     & 43657 & 444780 
\end{tabular}
\end{center}

\bigskip

\begin{center}
\caption{Time (seconds) and peak memory consumption (megabytes) of BWT calculations on various files from the Pizza \& Chili repetitive corpus, for three settings of the parameters $w$ and $p$ and for the comparison method {\tt simplebwt}.}
\label{tab:pfbwt-pizza}
\begin{tabular}{r|rr|rr|rr|rr}
\multicolumn{1}{c}{} & \multicolumn{2}{c}{$w = 6, p = 20$} & \multicolumn{2}{c}{$w = 8, p = 50$} & \multicolumn{2}{c}{$w = 10, p = 100$} & \multicolumn{2}{c}{\tt simplebwt}\\
file & time & peak  & time & peak & time & peak & time & peak \\
\hline
{ \tt cere} & 90 & 603 & 79 & {\bf 559} & {\bf 74} & 801 & 90 & 3962 \\
{ \tt einstein.en.txt} & 53 & 196 & 40 & 88  & {\bf 35} & {\bf 53} & 97 & 4016   \\
{ \tt influenza}       & {\bf 27} & {\bf 166} & 27 & 284 & 33 & 435 & 30 & 1331  \\
{ \tt kernel}          & 43 & 170 & 29 & {\bf 143} & {\bf 25} & 144 & 50 & 2216  \\
{ \tt world\_leaders}  & 7  & {\bf 50}  & 7  & 74  & {\bf 7}  & 98  & 7  & 405   \\
\end{tabular}
\end{center}
\end{table}

\begin{figure}
\centering
\subfloat[Peak memory (MB) vs. $(w,p)$]{ \label{fig:wp-exp-mem} \includegraphics[width=.45\textwidth]{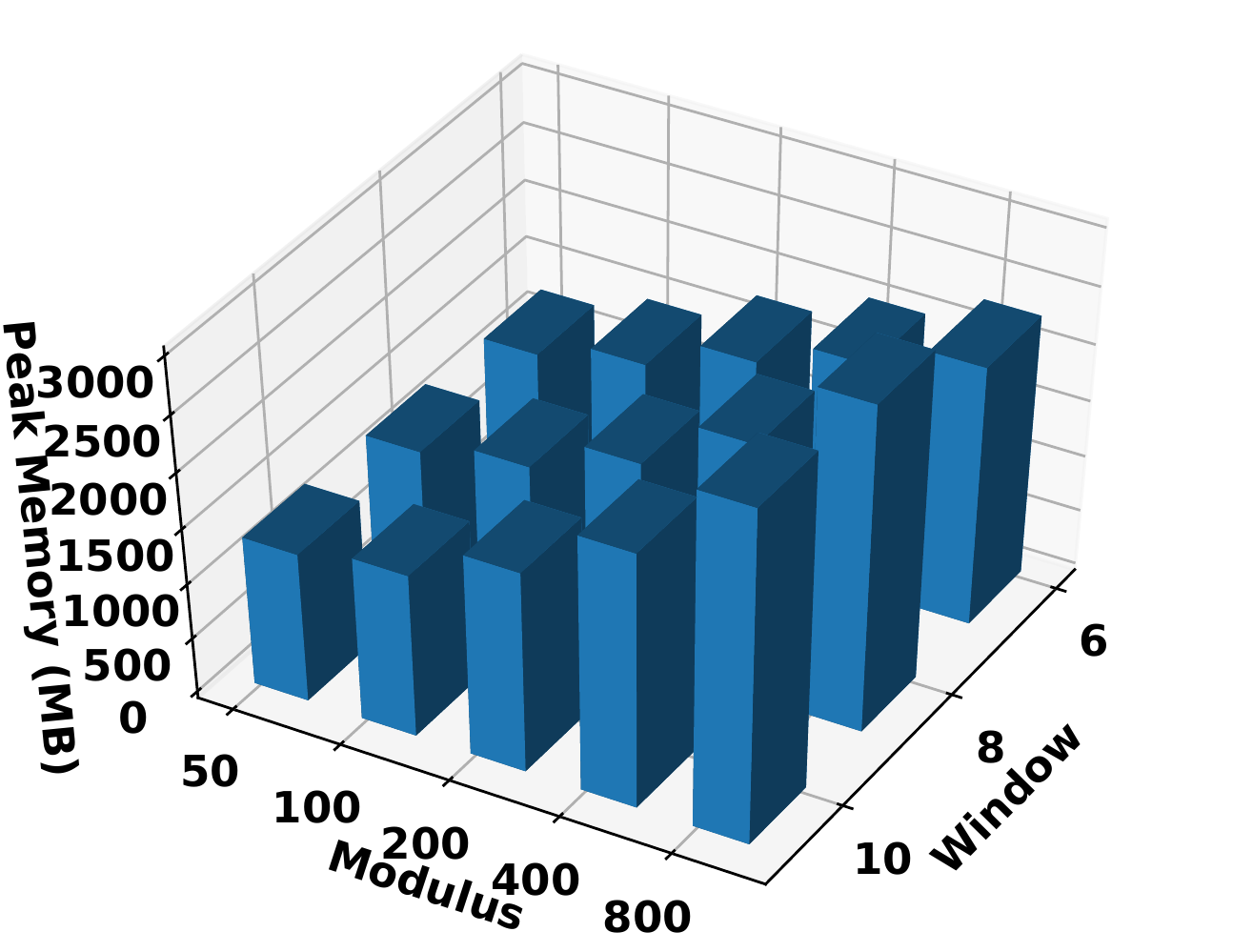} }
\subfloat[Time (s) vs. $(w,p)$]{ \label{fig:wp-exp-time} \includegraphics[width=.45\textwidth]{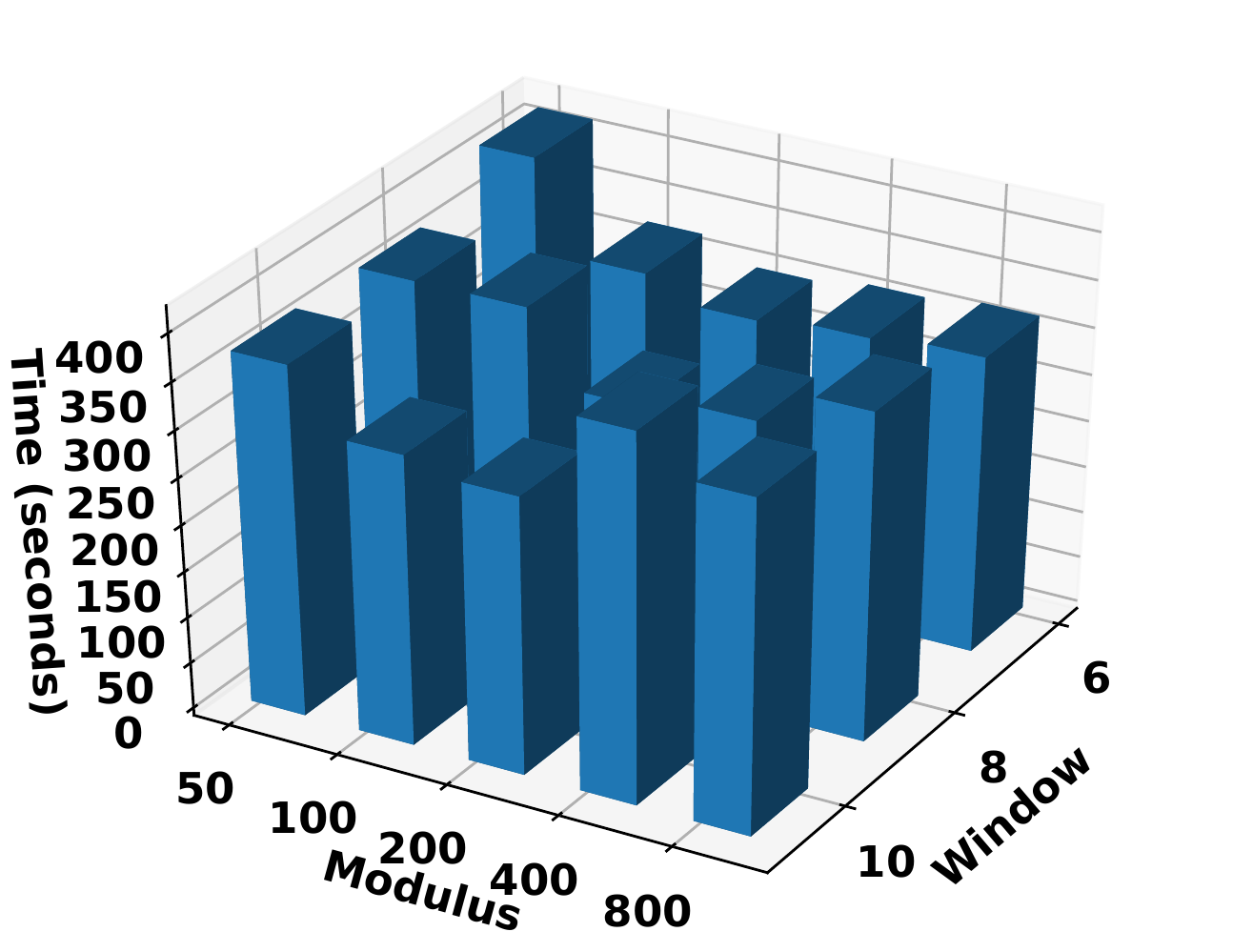} }
\caption{Results versus various choices of parameters $(w,p)$ on a collection
of 1000 Salmonella genomes (2.7 GB).}
\label{fig:wp-exps}
\end{figure}



\section{Indexing}
\label{sec:indexing}

The BWT is widely used as part of the FM index~\cite{FM05}, which is the heart
of popular DNA sequencing read aligners such as Bowtie~\cite{LTPS09,LS12}, BWA~\cite{LD10} and
SOAP 2~\cite{LYLLYKW09}. In these tools, rank support is added to 
the BWT using sampled arrays of precalculated ranks.  Similarly, locate support
is added using a sampled suffix array (SA).
Until recently, SA samples for massive, highly repetitive
datasets were much larger than the BWT, slow to calculate, or both.  Gagie,
Navarro, and Prezza~\cite{GNP18} have shown that only the SA values
at the ends of runs in the BWT need to be stored.  We are currently studying
how to build this sample during the process of computing the BWT from the
dictionary and the parse.
We show that prefix-free parsing can be incorporated into the construction of a counting-only run-length
FM index (RLFM) and we measure the time and space efficiency of the RLFM
construction and its ``count'' query in a DNA sequencing context
using data from the 1000 Genomes Project.
We compare the performance of the RLFM based methods to the popular Bowtie~\cite{LTPS09} read aligner.

\subsection{Implementation}

Constructing the counting-only RLFM requires three steps: building the BWT
from the text, generating the $F$ array, and run-length encoding
the BWT. We use prefix-free parsing to build the BWT. The $F$ array is easily built in a single pass over
the text. Run-length encoding is performed using the implementation by
Gagie, et.  al~\cite{GNP18}, which draws upon data structures implemented in
the Succinct Data Structure Library (SDSL) \cite{gbmp2014sea}; the concatenated
run-heads of the BWT are stored in a Huffman shaped wavelet tree, and auxiliary
bit-vectors are used to refer to the positions of the runs within the BWT.
During index construction, all characters that are not A, C, G, T, or N are
ignored.  

Typically, the BWT is built from a full SA, and thus
a sample could be built by simply retaining parts of the initial SA.  However,
prefix-free parsing takes a different approach, so to build a SA sample the
method would either need to be modified directly or a SA sample would have to
be generated \emph{post-hoc}. In the latter case, we can build a SA
sample solely from the BWT by ``back-stepping'' through the BWT with LF
mappings, and storing samples only at run-starts and run-ends. The main caveats
to this method are that an LF mapping would have to be done for every character
in the text, and that the entire BWT would need to be in memory in some form.
These drawbacks would be especially noticeable for large collections. 
 For now, we focus only on counting
support, and so SA samples are excluded from these
experiments except where otherwise noted.

\subsection{Experiments}

The indexes were built using data from the 1000 Genomes Project (1KG)
~\cite{consortium_global_2015}. Distinct versions of human chromosome 19 (``chr19'')
were created by using the \texttt{bcftools consensus} \cite{bcftools} tool to combine the
chr19 sequence from the GRCh37 assembly with a single haplotype
(maternal or paternal) from an individual in the 1KG project.
Chr19 is 58 million DNA bases long and makes up
1.9\% of the overall human genome sequence.
In all, we built 10 sets of chr19s, containing 1, 2, 10, 30, 50, 100, 250, 500, and 1000
distinct versions, respectively. This allows us to measure running time, peak memory
footprint and index size as a function of the collection size.
Index-building and counting experiments were run on a server with
Intel(R) Xeon(R) CPU E5-2650 v4 @ 2.20GHz and $512$ gigabytes of RAM. 

\subsubsection{Index building}

We compared our computational efficiency to that of Bowtie~\cite{LTPS09} v1.2.2, using the
\texttt{bowtie-build} command to build Bowtie indexes for each collection.
We could not measure beyond the 250 distinct versions as Bowtie exceeded available memory.
Peak memory was measured using the Unix \texttt{time -v} command, taking
the value from its ``Maximum resident set size (kbytes)'' field. Timings were
calculated and output by the programs themselves. The peak memory footprint
and running time for RLFM index building for each collection are plotted in
Figure \ref{fig:chr19-exps}.

\begin{figure}
\centering
\subfloat[Peak memory usage]{
\label{fig:chr19-constr-mem} \includegraphics[width=.45\textwidth]{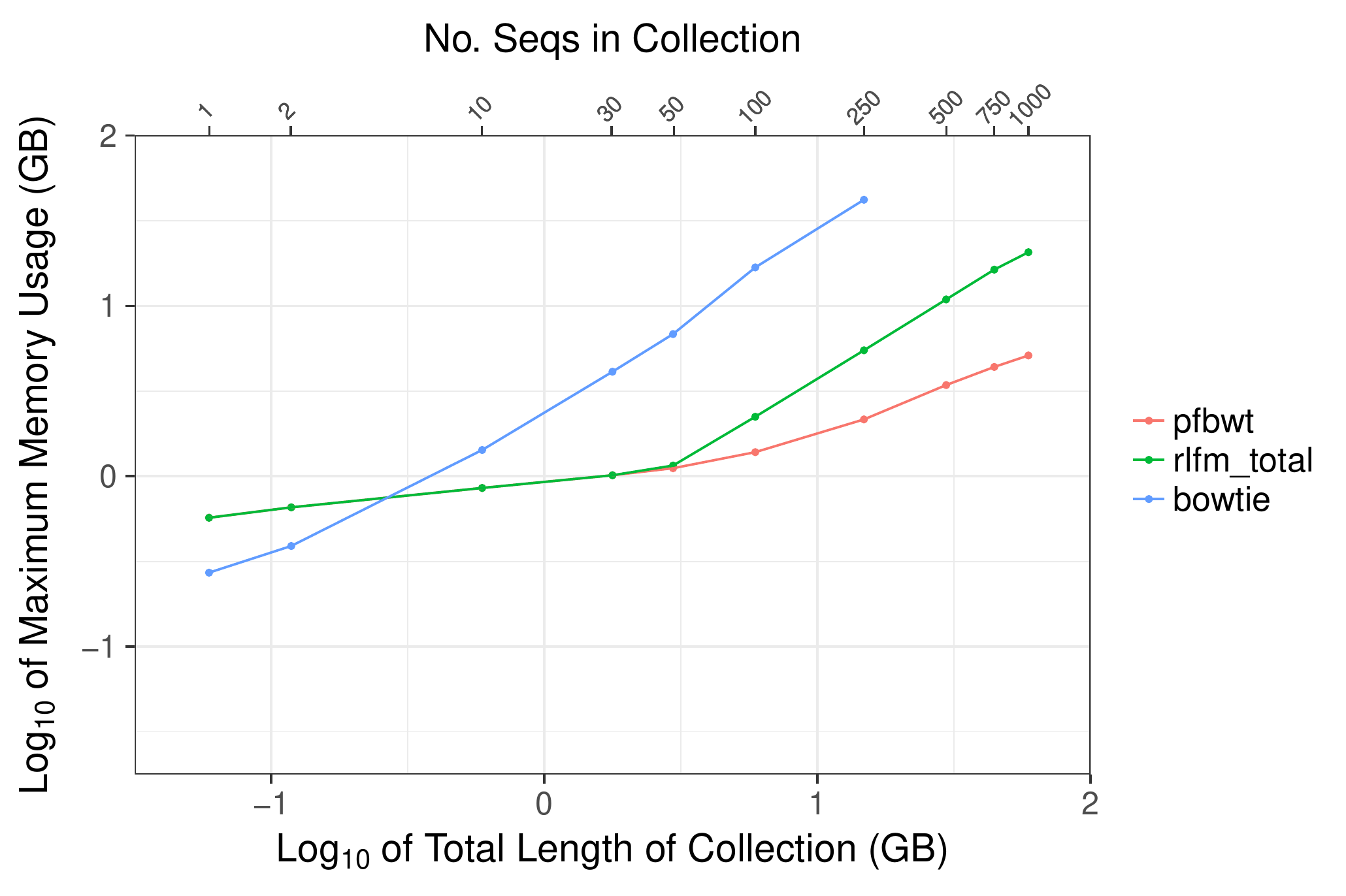} }
\subfloat[Running time]{ \label{fig:chr19-constr-time} \includegraphics[width=.45\textwidth]{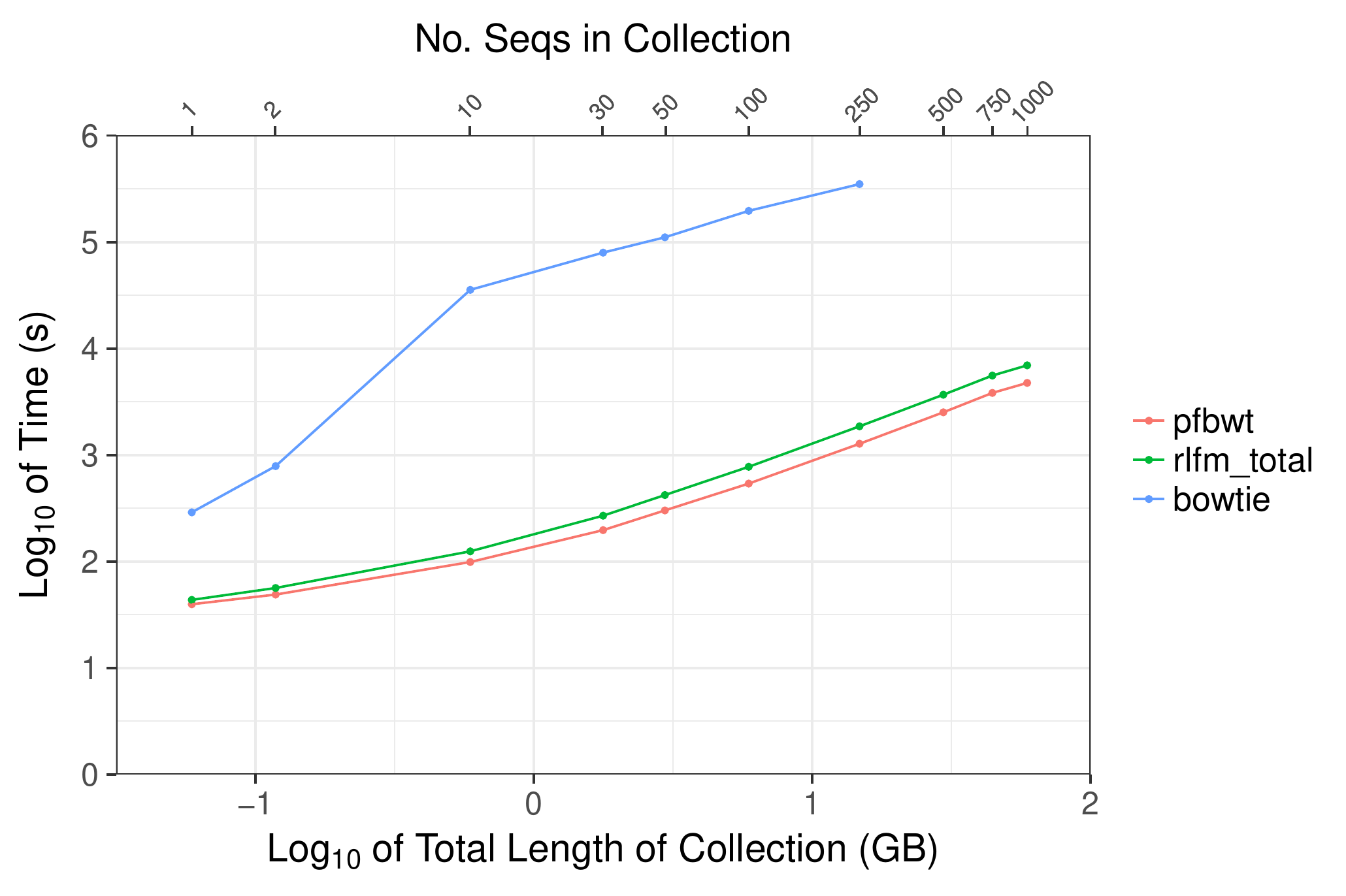} }
\caption{RLFM indexing efficiency for successively larger collections of genetically distinct human chr19s.
Results for the prefix-free parsing step (``pfbwt'') are shown
alongside the overall RLFM index-building (``rlfm\_total'')
and Bowtie (``bowtie'') results.}
\label{fig:chr19-exps}
\end{figure}

Compared to Bowtie, the resources required for RLFM index-building grew much
more slowly.  For example, the RLFM required about 20 GB to build an index for
1,000 chr19 sequences, whereas Bowtie required twice that amount to build an
index for just 250 sequences.
For data points up to 50 sequences in Figure \ref{fig:chr19-exps}a, the ``pfbwt''
and ``rlfm\_total'' points coincided, indicating that prefix-free parsing drove peak
memory footprint for the overall index-building process.  After 50 sequences,
however, ``pfbwt'' fell below ``rlfm\_total'' and accounted for a diminishing
fraction of the footprint as the collection grew.
Similarly, prefix-free parsing accounted for a diminishing fraction of total index-building time
as the sequence collection grew (Figure \ref{fig:chr19-exps}b).
These trends illustrate the advantage of prefix-free parsing when collections
are large and repetitive.


We can extrapolate the time and memory required to index many whole human genomes.
Considering chr19 accounts for 1.9\% of the human genome sequence,
and assuming that indexing 1,000 whole human-genome haplotypes will therefore require about 52.6 times
the time and memory as indexing 1,000 chr19s, we extrapolate that
indexing 1,000 human haplotypes would incur a peak memory footprint of about 1.01 TB
and require about 102 hours to complete.
Of course, the latter figure can be improved with parallelization.

We also measured that the index produced for the 1,000 chr19s
took about 131MB of disk space.
Thus, we can extrapolate that the index for 1,000 
human haplotypes would take about 6.73 GB.
While this figure makes us optimistic about future scaling, it is not directly comparable
to the size of a Bowtie genome index since it excludes the SA samples
needed for ``locate'' queries.


\subsubsection{Count query time}

We measured how the speed of the RLFM ``count'' operation scales with the size
of the sequence collection. Given a string pattern, ``count''
applies the LF mapping repeatedly to obtain the range of
SA positions matching the pattern.
It returns the size of this range.

We measured average ``count'' time by conducting a simple simulation of DNA-sequencing-like data.
First we took the first chr19 version and extracted and saved 100,000 random substrings of length 100.
That chr19 was included in all the successive collections,
so these substrings are all guaranteed to occur at least once
regardless of which collection we are querying.

We then queried each of the collections with the 100,000 substrings and divided the
running time by 100,000 to obtain the average ``count'' query time.
Query time increases as the collection grows (Figure \ref{fig:rlfm-count-exp})
but does so slowly, increasing from 750
microseconds for 1 sequence to 933 microseconds for 1,000 sequences.

\begin{figure}
\centering
\label{fig:rlfm-count} \includegraphics[width=.45\textwidth]{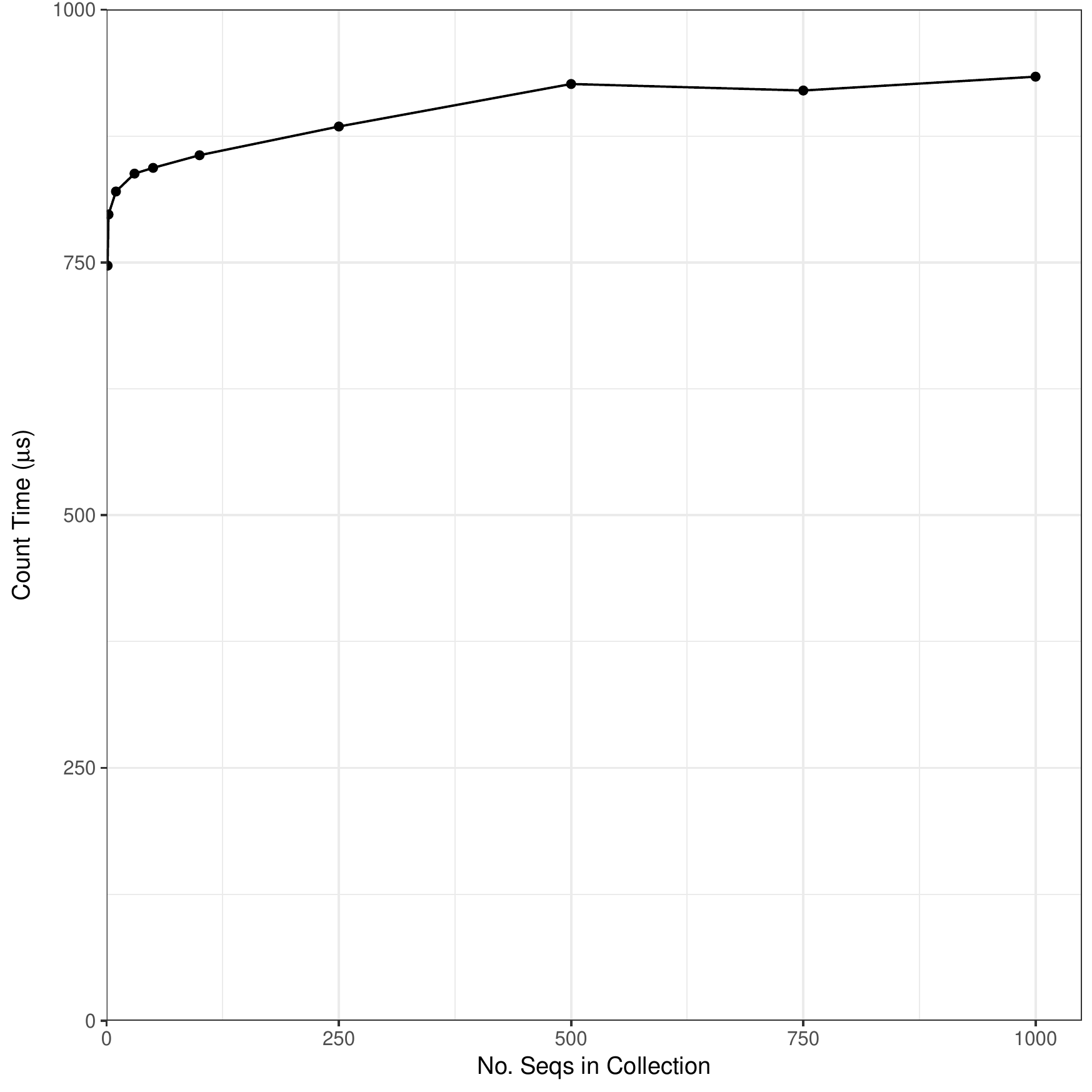}
\caption{RLFM average ``count'' query time for successively larger collections of genetically distinct human chr19s.  }
\label{fig:rlfm-count-exp}
\end{figure}

\subsubsection{Including the SA sample}

Though no SA sample was built for the experiments described so far,
such a sample is needed for ``locate'' queries that return the text offset
corresponding to a BWT element.
A SA sample can be obtained using the ``back-stepping'' method
described above.
We implemented a preliminary version of this method to examine whether
prefix-free parsing is a bottleneck in that scenario. 
Here, index building consists of three steps: (1) building the BWT
using prefix-free parsing (``pfbwt''), 
(2) back-stepping to create the SA sample and auxiliary structures (``bwtscan''),
and (3) run-length encoding the BWT (``rle'').
We built RLFM indexes for the same chr19
collections as above, measuring running time and peak memory footprint
for each of the three index-building step independently (Figure \ref{fig:ssa-exps}).

\begin{figure}
\centering
\subfloat[Peak memory usage]{
\label{fig:ssa-constr-mem} \includegraphics[width=.45\textwidth]{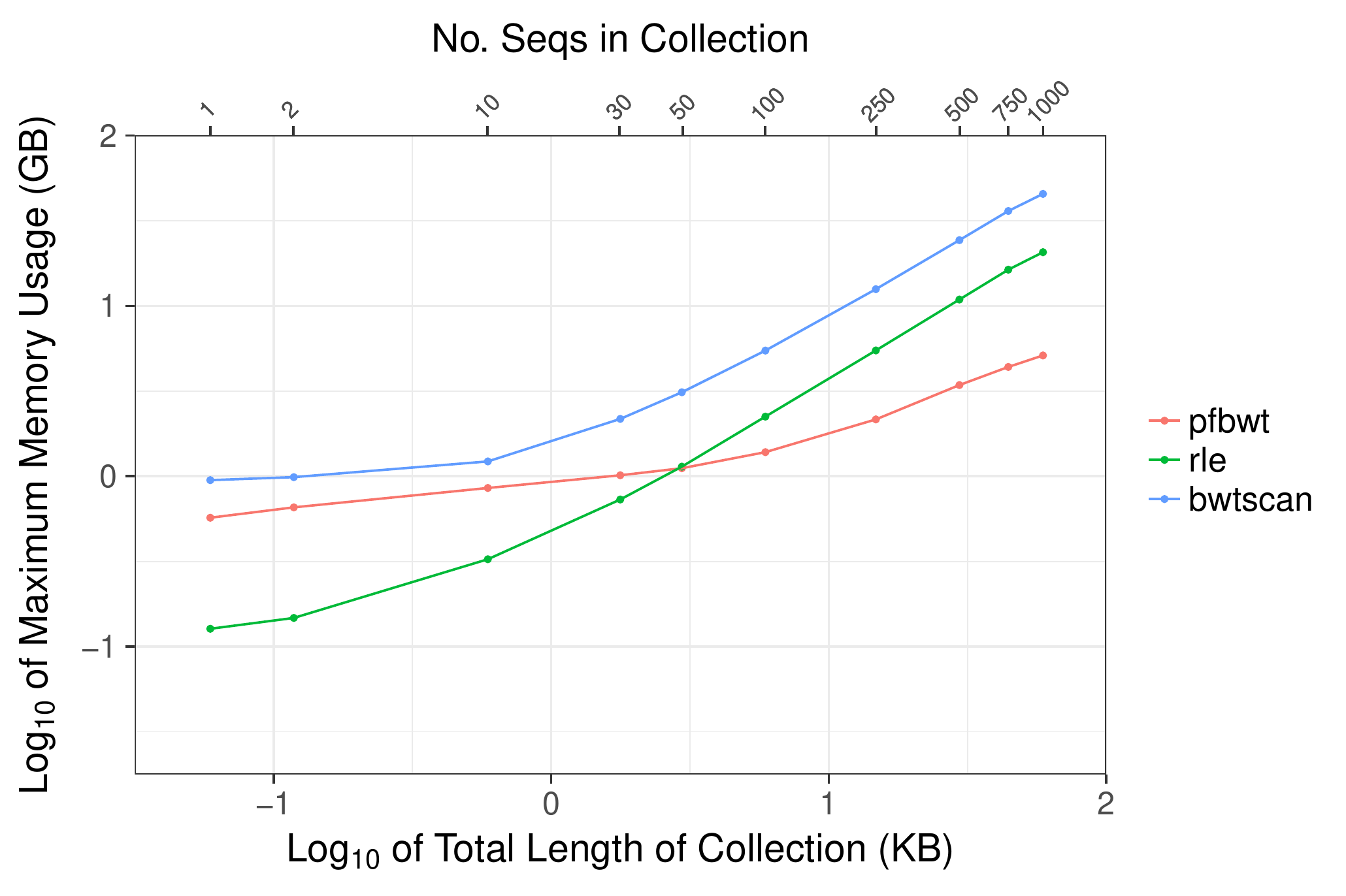} }
\subfloat[Running time]{ \label{fig:ssa-constr-time} \includegraphics[width=.45\textwidth]{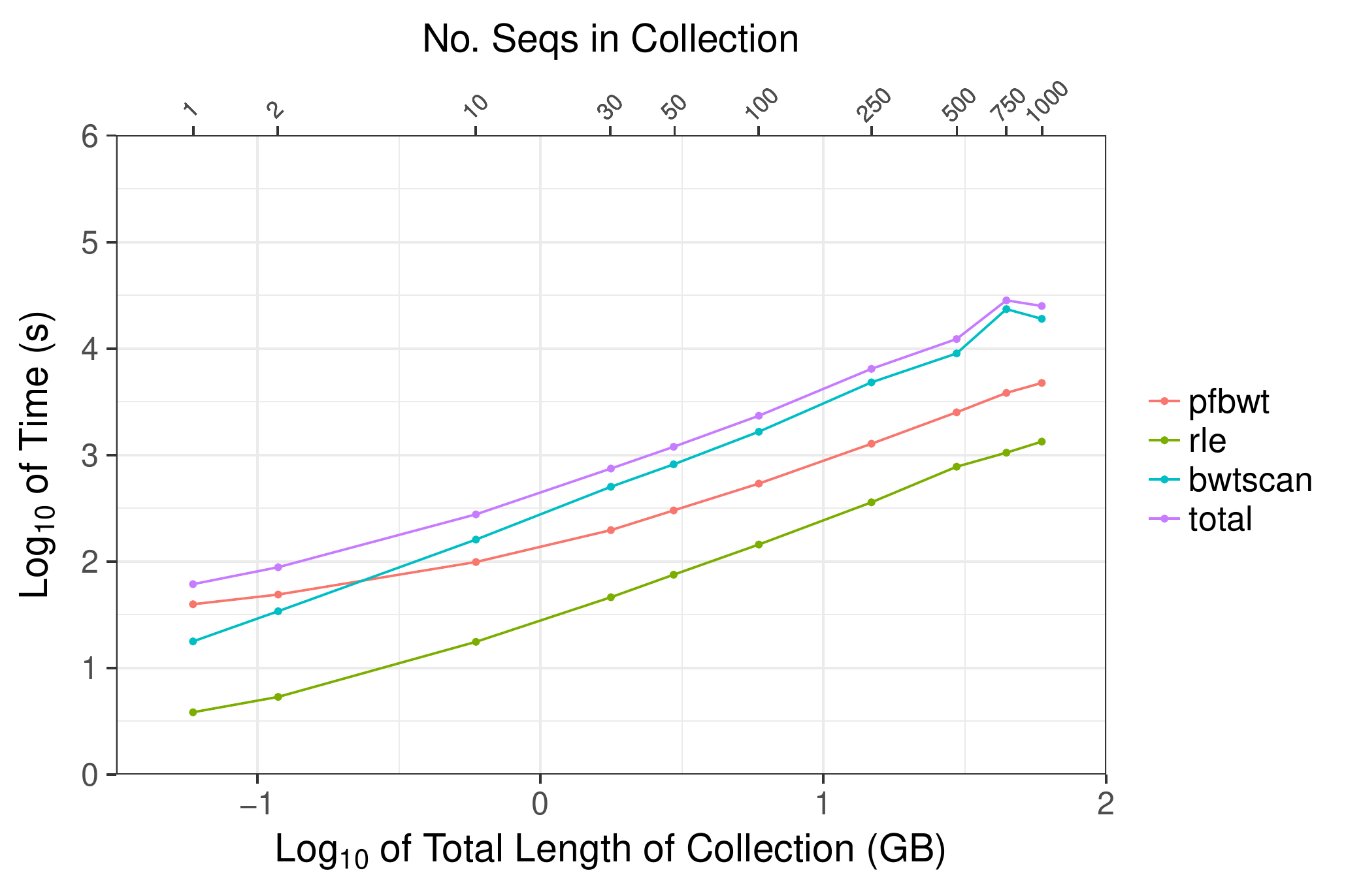} }
\caption{Computational efficiency of the three stages of index building
    when SA sampling is included.  
   Results are shown for the
    prefix-free parsing (``pfbwt''), back-stepping (``bwtscan'') and run-length
    encoding (``rle'') steps. ``total'' is the
    sum of the three steps.}
\label{fig:ssa-exps}
\end{figure}

The ``bwtscan'' step consistently drives peak memory footprint, since
it stores the entire BWT in memory as a Huffman shaped wavelet tree
\cite{gbmp2014sea}.
The ``pfbwt'' step has a substantially smaller footprint and used
the least memory of all the steps for collections larger than 50 sequences.
``pfbwt'' is the slowest step for small collections,
but ``bwtscan'' becomes the slowest for 10 or more sequences.
We conclude that as the collection of
sequences grows, the prefix-free parsing method
contributes proportionally less to peak memory footprint and running time,
and presents no bottlenecks for large collections.



\section{Conclusions}

We have described how prefix-free parsing can be used as preprocessing step to enable compression-aware computation of BWTs of large genomic databases.  Our results demonstrate that the dictionaries and parses are often significantly smaller than the original input, and so may fit in a reasonable internal memory even when $T$ is very large.  We show how the BWT can be constructed from a dictionary and parse alone.    Lastly, we give experiments demonstrating how the run length compressed FM-index can be constructed from the parse and dictionary.   The efficiency of this construction shows that this method opens up the possibility to cosntructing the BWT for datasets that are terabytes in size; such as GenomeTrakr \cite{genometrakr} and MetaSub \cite{metasub}.  

Finally, we note that when downloading large datasets, prefix-free parsing can avoid storing the whole uncompressed dataset in memory or on disk.  Suppose we run the parser on the dataset as it is downloaded, either as a stream or in chunks.  We have to keep the dictionary in memory for parsing but we can write the parse to disk as we go, and in any case we can use less total space than the dataset itself.  Ideally, the parsing could even be done server-side to reduce transmission time and/or bandwidth --- which we leave for future implementation and experimentation. 


\begin{backmatter}
\section*{Acknowledgements}
The authors thank Risto J\"arvinen for the insight they gained from his project on {\tt rsync} in the Data Compression course at Aalto University. 

\section*{Authors' contributions}
TG and GM conceptualized the idea and developed the algorithmic contributions of this work. AK and GM implemented the construction of the prefix-free parsing and conducted all experiments.  CB and TG assisted and oversaw the experiments and implementation.  TM and BL implemented and tested the construction of the run-length compressed FM-index.  All authors contributed to the writing of this manuscript.

\section*{Availability}
Prefix-free parsing and all accompanied documents are available at \url{https://gitlab.com/manzai/Big-BWT}.

\section*{Competing interests}
The authors declare that they have no competing interests.

\section*{Funding} CB and AK were supported by National Science Foundation (IIS 1618814). AK was also supported by a post-doctoral fellowship from the University of Florida Informatics Institute. TG  was partially supported by FONDECYT (1171058).  GM was partially supported by PRIN grant (201534HNXC).

\end{backmatter}


\newcommand{\BMCxmlcomment}[1]{}

\BMCxmlcomment{

<refgrp>

<bibl id="B1">
  <title><p>A global reference for human genetic variation</p></title>
  <aug>
    <au><cnm>{The 1000 Genomes Project Consortium}</cnm></au>
  </aug>
  <source>Nature</source>
  <pubdate>2015</pubdate>
  <volume>526</volume>
  <fpage>68</fpage>
  <lpage>-74</lpage>
</bibl>

<bibl id="B2">
  <title><p>The 100,000 Genomes Project: bringing whole genome sequencing to
  the NHS</p></title>
  <aug>
    <au><snm>Turnbull</snm><fnm>C.</fnm></au>
    <au><cnm>others</cnm></au>
  </aug>
  <source>British Medical Journal</source>
  <pubdate>2018</pubdate>
  <volume>361</volume>
  <fpage>k1687</fpage>
</bibl>

<bibl id="B3">
  <title><p>Whole-genome sequencing is taking over foodborne disease
  surveillance</p></title>
  <aug>
    <au><snm>Carleton</snm><fnm>H.A.</fnm></au>
    <au><snm>Gerner Smidt</snm><fnm>P.</fnm></au>
  </aug>
  <source>Microbe</source>
  <pubdate>2016</pubdate>
  <volume>11</volume>
  <fpage>311</fpage>
  <lpage>-317</lpage>
</bibl>

<bibl id="B4">
  <title><p>The Public Health Impact of a Publically Available, Environmental
  Database of Microbial Genomes</p></title>
  <aug>
    <au><snm>Stevens</snm><fnm>E.L.</fnm></au>
    <au><snm>Timme</snm><fnm>R.</fnm></au>
    <au><snm>Brown</snm><fnm>E.W.</fnm></au>
    <au><snm>Allard</snm><fnm>M.W.</fnm></au>
    <au><snm>Strain</snm><fnm>E.</fnm></au>
    <au><snm>Bunning</snm><fnm>K.</fnm></au>
    <au><snm>Musser</snm><fnm>S.</fnm></au>
  </aug>
  <source>Frontiers in Microbiology</source>
  <pubdate>2017</pubdate>
  <volume>8</volume>
  <fpage>808</fpage>
</bibl>

<bibl id="B5">
  <title><p>A Block-sorting Lossless Compression Algorithm</p></title>
  <aug>
    <au><snm>Burrows</snm><fnm>M</fnm></au>
    <au><snm>Wheeler</snm><fnm>DJ</fnm></au>
  </aug>
  <pubdate>1994</pubdate>
</bibl>

<bibl id="B6">
  <title><p>{Burrows-Wheeler} Transform for Terabases</p></title>
  <aug>
    <au><snm>Sir{\'{e}}n</snm><fnm>J</fnm></au>
  </aug>
  <source>Proccedings of the 2016 Data Compression Conference (DCC)</source>
  <pubdate>2016</pubdate>
  <fpage>211</fpage>
  <lpage>-220</lpage>
</bibl>

<bibl id="B7">
  <title><p>Lightweight Data Indexing and Compression in External
  Memory</p></title>
  <aug>
    <au><snm>Ferragina</snm><fnm>P</fnm></au>
    <au><snm>Gagie</snm><fnm>T</fnm></au>
    <au><snm>Manzini</snm><fnm>G</fnm></au>
  </aug>
  <source>Algorithmica</source>
  <pubdate>2012</pubdate>
  <volume>63</volume>
  <issue>3</issue>
  <fpage>707</fpage>
  <lpage>-730</lpage>
</bibl>

<bibl id="B8">
  <title><p>From {LZ77} to the Run-Length Encoded Burrows-Wheeler Transform,
  and Back</p></title>
  <aug>
    <au><snm>Policriti</snm><fnm>A</fnm></au>
    <au><snm>Prezza</snm><fnm>N</fnm></au>
  </aug>
  <source>Proceedings of the 28th Symposium on Combinatorial Pattern Matching
  (CPM)</source>
  <pubdate>2017</pubdate>
  <fpage>17:1</fpage>
  <lpage>-17:10</lpage>
</bibl>

<bibl id="B9">
  <source>https://rsync.samba.org</source>
</bibl>

<bibl id="B10">
  <title><p>Practical linear-time \emph{O}(1)-workspace suffix sorting for
  constant alphabets</p></title>
  <aug>
    <au><snm>Nong</snm><fnm>G</fnm></au>
  </aug>
  <source>{ACM} Trans. Inf. Syst.</source>
  <pubdate>2013</pubdate>
  <volume>31</volume>
  <issue>3</issue>
  <fpage>15</fpage>
</bibl>

<bibl id="B11">
  <title><p>Indexing compressed text</p></title>
  <aug>
    <au><snm>Ferragina</snm><fnm>P</fnm></au>
    <au><snm>Manzini</snm><fnm>G</fnm></au>
  </aug>
  <source>Journal of the ACM (JACM)</source>
  <publisher>ACM</publisher>
  <pubdate>2005</pubdate>
  <volume>52</volume>
  <issue>4</issue>
  <fpage>552</fpage>
  <lpage>-581</lpage>
</bibl>

<bibl id="B12">
  <title><p>Inducing enhanced suffix arrays for string collections</p></title>
  <aug>
    <au><snm>Louza</snm><fnm>FA</fnm></au>
    <au><snm>Gog</snm><fnm>S</fnm></au>
    <au><snm>Telles</snm><fnm>GP</fnm></au>
  </aug>
  <source>Theor. Comput. Sci.</source>
  <pubdate>2017</pubdate>
  <volume>678</volume>
  <fpage>22</fpage>
  <lpage>-39</lpage>
</bibl>

<bibl id="B13">
  <source>http://pizzachili.dcc.uchile.cl/repcorpus.html</source>
</bibl>

<bibl id="B14">
  <title><p>The Public Health Impact of a Publically Available, Environmental
  Database of Microbial Genomes</p></title>
  <aug>
    <au><snm>Stevens</snm><fnm>EL</fnm></au>
    <au><snm>Timme</snm><fnm>R</fnm></au>
    <au><snm>Brown</snm><fnm>EW</fnm></au>
    <au><snm>Allard</snm><fnm>MW</fnm></au>
    <au><snm>Strain</snm><fnm>E</fnm></au>
    <au><snm>Bunning</snm><fnm>K</fnm></au>
    <au><snm>Musser</snm><fnm>S</fnm></au>
  </aug>
  <source>Frontiers in Microbiology</source>
  <publisher>Frontiers</publisher>
  <pubdate>2017</pubdate>
  <volume>8</volume>
  <fpage>808</fpage>
</bibl>

<bibl id="B15">
  <title><p>Ultrafast and memory-efficient alignment of short {DNA} sequences
  to the human genome</p></title>
  <aug>
    <au><snm>Langmead</snm><fnm>B</fnm></au>
    <au><snm>Trapnell</snm><fnm>C</fnm></au>
    <au><snm>Pop</snm><fnm>M</fnm></au>
    <au><snm>Salzberg</snm><fnm>SL</fnm></au>
  </aug>
  <source>Genome biology</source>
  <publisher>BioMed Central</publisher>
  <pubdate>2009</pubdate>
  <volume>10</volume>
  <issue>3</issue>
  <fpage>R25</fpage>
</bibl>

<bibl id="B16">
  <title><p>Fast gapped-read alignment with {Bowtie 2}</p></title>
  <aug>
    <au><snm>Langmead</snm><fnm>B</fnm></au>
    <au><snm>Salzberg</snm><fnm>SL</fnm></au>
  </aug>
  <source>Nature Methods</source>
  <pubdate>2012</pubdate>
  <volume>9</volume>
  <issue>4</issue>
  <fpage>357</fpage>
  <lpage>-360</lpage>
  <url>http://www.nature.com/doifinder/10.1038/nmeth.1923
  http://www.ncbi.nlm.nih.gov/pubmed/22388286</url>
</bibl>

<bibl id="B17">
  <title><p>Fast and accurate long-read alignment with Burrows--Wheeler
  transform</p></title>
  <aug>
    <au><snm>Li</snm><fnm>H</fnm></au>
    <au><snm>Durbin</snm><fnm>R</fnm></au>
  </aug>
  <source>Bioinformatics</source>
  <publisher>Oxford University Press</publisher>
  <pubdate>2010</pubdate>
  <volume>26</volume>
  <issue>5</issue>
  <fpage>589</fpage>
  <lpage>-595</lpage>
</bibl>

<bibl id="B18">
  <title><p>SOAP2: an improved ultrafast tool for short read
  alignment</p></title>
  <aug>
    <au><snm>Li</snm><fnm>R</fnm></au>
    <au><snm>Yu</snm><fnm>C</fnm></au>
    <au><snm>Li</snm><fnm>Y</fnm></au>
    <au><snm>Lam</snm><fnm>TW</fnm></au>
    <au><snm>Yiu</snm><fnm>SM</fnm></au>
    <au><snm>Kristiansen</snm><fnm>K</fnm></au>
    <au><snm>Wang</snm><fnm>J</fnm></au>
  </aug>
  <source>Bioinformatics</source>
  <publisher>Oxford University Press</publisher>
  <pubdate>2009</pubdate>
  <volume>25</volume>
  <issue>15</issue>
  <fpage>1966</fpage>
  <lpage>-1967</lpage>
</bibl>

<bibl id="B19">
  <title><p>Optimal-Time Text Indexing in BWT-runs Bounded Space</p></title>
  <aug>
    <au><snm>Gagie</snm><fnm>T</fnm></au>
    <au><snm>Navarro</snm><fnm>G</fnm></au>
    <au><snm>Prezza</snm><fnm>N</fnm></au>
  </aug>
  <source>Proceedings of the 29th Symposium on Discrete Algorithms
  (SODA)</source>
  <pubdate>2018</pubdate>
  <fpage>1459</fpage>
  <lpage>-1477</lpage>
</bibl>

<bibl id="B20">
  <title><p>From Theory to Practice: Plug and Play with Succinct Data
  Structures</p></title>
  <aug>
    <au><snm>Gog</snm><fnm>S</fnm></au>
    <au><snm>Beller</snm><fnm>T</fnm></au>
    <au><snm>Moffat</snm><fnm>A</fnm></au>
    <au><snm>Petri</snm><fnm>M</fnm></au>
  </aug>
  <source>13th International Symposium on Experimental Algorithms, (SEA
  2014)</source>
  <pubdate>2014</pubdate>
  <fpage>326</fpage>
  <lpage>337</lpage>
</bibl>

<bibl id="B21">
  <title><p>A global reference for human genetic variation</p></title>
  <aug>
    <au><snm>Consortium</snm><fnm>TGP</fnm></au>
  </aug>
  <source>Nature</source>
  <pubdate>2015</pubdate>
  <volume>526</volume>
  <issue>7571</issue>
  <fpage>68</fpage>
  <lpage>-74</lpage>
  <url>https://www.nature.com/articles/nature15393</url>
</bibl>

<bibl id="B22">
  <title><p>{{B}{C}{F}tools/{R}o{H}: a hidden {M}arkov model approach for
  detecting autozygosity from next-generation sequencing data}</p></title>
  <aug>
    <au><snm>Narasimhan</snm><fnm>V.</fnm></au>
    <au><snm>Danecek</snm><fnm>P.</fnm></au>
    <au><snm>Scally</snm><fnm>A.</fnm></au>
    <au><snm>Xue</snm><fnm>Y.</fnm></au>
    <au><snm>Tyler Smith</snm><fnm>C.</fnm></au>
    <au><snm>Durbin</snm><fnm>R.</fnm></au>
  </aug>
  <source>Bioinformatics</source>
  <pubdate>2016</pubdate>
  <volume>32</volume>
  <issue>11</issue>
  <fpage>1749</fpage>
  <lpage>-1751</lpage>
</bibl>

<bibl id="B23">
  <title><p>{The Metagenomics and Metadesign of the Subways and Urban Biomes
  (MetaSUB) International Consortium inaugural meeting report}</p></title>
  <aug>
    <au><cnm>{MetaSUB International Consortium}</cnm></au>
  </aug>
  <source>Microbiome</source>
  <pubdate>2016</pubdate>
  <volume>4</volume>
  <issue>1</issue>
  <fpage>24</fpage>
</bibl>

</refgrp>
} 


\begin{thebibliography}{23}
\ifx \bisbn   \undefined \def \bisbn  #1{ISBN #1}\fi
\ifx \binits  \undefined \def \binits#1{#1}\fi
\ifx \bauthor  \undefined \def \bauthor#1{#1}\fi
\ifx \batitle  \undefined \def \batitle#1{#1}\fi
\ifx \bjtitle  \undefined \def \bjtitle#1{#1}\fi
\ifx \bvolume  \undefined \def \bvolume#1{\textbf{#1}}\fi
\ifx \byear  \undefined \def \byear#1{#1}\fi
\ifx \bissue  \undefined \def \bissue#1{#1}\fi
\ifx \bfpage  \undefined \def \bfpage#1{#1}\fi
\ifx \blpage  \undefined \def \blpage #1{#1}\fi
\ifx \burl  \undefined \def \burl#1{\textsf{#1}}\fi
\ifx \doiurl  \undefined \def \doiurl#1{\textsf{#1}}\fi
\ifx \betal  \undefined \def \betal{\textit{et al.}}\fi
\ifx \binstitute  \undefined \def \binstitute#1{#1}\fi
\ifx \binstitutionaled  \undefined \def \binstitutionaled#1{#1}\fi
\ifx \bctitle  \undefined \def \bctitle#1{#1}\fi
\ifx \beditor  \undefined \def \beditor#1{#1}\fi
\ifx \bpublisher  \undefined \def \bpublisher#1{#1}\fi
\ifx \bbtitle  \undefined \def \bbtitle#1{#1}\fi
\ifx \bedition  \undefined \def \bedition#1{#1}\fi
\ifx \bseriesno  \undefined \def \bseriesno#1{#1}\fi
\ifx \blocation  \undefined \def \blocation#1{#1}\fi
\ifx \bsertitle  \undefined \def \bsertitle#1{#1}\fi
\ifx \bsnm \undefined \def \bsnm#1{#1}\fi
\ifx \bsuffix \undefined \def \bsuffix#1{#1}\fi
\ifx \bparticle \undefined \def \bparticle#1{#1}\fi
\ifx \barticle \undefined \def \barticle#1{#1}\fi
\ifx \bconfdate \undefined \def \bconfdate #1{#1}\fi
\ifx \botherref \undefined \def \botherref #1{#1}\fi
\ifx \url \undefined \def \url#1{\textsf{#1}}\fi
\ifx \bchapter \undefined \def \bchapter#1{#1}\fi
\ifx \bbook \undefined \def \bbook#1{#1}\fi
\ifx \bcomment \undefined \def \bcomment#1{#1}\fi
\ifx \oauthor \undefined \def \oauthor#1{#1}\fi
\ifx \citeauthoryear \undefined \def \citeauthoryear#1{#1}\fi
\ifx \endbibitem  \undefined \def \endbibitem {}\fi
\ifx \bconflocation  \undefined \def \bconflocation#1{#1}\fi
\ifx \arxivurl  \undefined \def \arxivurl#1{\textsf{#1}}\fi
\csname PreBibitemsHook\endcsname

\bibitem{1000genomes}
\begin{barticle}
\bauthor{\bsnm{{The 1000 Genomes Project Consortium}}}:
\batitle{A global reference for human genetic variation}.
\bjtitle{Nature}
\bvolume{526},
\bfpage{68}--\blpage{74}
(\byear{2015})
\end{barticle}
\endbibitem

\bibitem{100K}
\begin{barticle}
\bauthor{\bsnm{Turnbull}, \binits{C.}}, \betal:
\batitle{The 100,000 genomes project: bringing whole genome sequencing to the
  nhs}.
\bjtitle{British Medical Journal}
\bvolume{361},
\bfpage{1687}
(\byear{2018})
\end{barticle}
\endbibitem

\bibitem{carleton2016whole}
\begin{barticle}
\bauthor{\bsnm{Carleton}, \binits{H.A.}},
\bauthor{\bsnm{Gerner-Smidt}, \binits{P.}}:
\batitle{Whole-genome sequencing is taking over foodborne disease
  surveillance}.
\bjtitle{Microbe}
\bvolume{11},
\bfpage{311}--\blpage{317}
(\byear{2016})
\end{barticle}
\endbibitem

\bibitem{genometrakr}
\begin{barticle}
\bauthor{\bsnm{Stevens}, \binits{E.L.}},
\bauthor{\bsnm{Timme}, \binits{R.}},
\bauthor{\bsnm{Brown}, \binits{E.W.}},
\bauthor{\bsnm{Allard}, \binits{M.W.}},
\bauthor{\bsnm{Strain}, \binits{E.}},
\bauthor{\bsnm{Bunning}, \binits{K.}},
\bauthor{\bsnm{Musser}, \binits{S.}}:
\batitle{The public health impact of a publically available, environmental
  database of microbial genomes}.
\bjtitle{Frontiers in Microbiology}
\bvolume{8},
\bfpage{808}
(\byear{2017})
\end{barticle}
\endbibitem

\bibitem{BW94}
\begin{botherref}
\oauthor{\bsnm{Burrows}, \binits{M.}},
\oauthor{\bsnm{Wheeler}, \binits{D.J.}}:
A block-sorting lossless compression algorithm.
Technical report,
Digital Equipment Corporation
(1994)
\end{botherref}
\endbibitem

\bibitem{Sir16}
\begin{bchapter}
\bauthor{\bsnm{Sir{\'{e}}n}, \binits{J.}}:
\bctitle{{Burrows-Wheeler} transform for terabases}.
In: \bbtitle{Proccedings of the 2016 Data Compression Conference (DCC)},
pp. \bfpage{211}--\blpage{220}
(\byear{2016})
\end{bchapter}
\endbibitem

\bibitem{FGM12}
\begin{barticle}
\bauthor{\bsnm{Ferragina}, \binits{P.}},
\bauthor{\bsnm{Gagie}, \binits{T.}},
\bauthor{\bsnm{Manzini}, \binits{G.}}:
\batitle{Lightweight data indexing and compression in external memory}.
\bjtitle{Algorithmica}
\bvolume{63}(\bissue{3}),
\bfpage{707}--\blpage{730}
(\byear{2012})
\end{barticle}
\endbibitem

\bibitem{PP17}
\begin{bchapter}
\bauthor{\bsnm{Policriti}, \binits{A.}},
\bauthor{\bsnm{Prezza}, \binits{N.}}:
\bctitle{From {LZ77} to the run-length encoded burrows-wheeler transform, and
  back}.
In: \bbtitle{Proceedings of the 28th Symposium on Combinatorial Pattern
  Matching (CPM)},
pp. \bfpage{17}--\blpage{11710}
(\byear{2017})
\end{bchapter}
\endbibitem

\bibitem{rsync}
\begin{botherref}
https://rsync.samba.org
\end{botherref}
\endbibitem

\bibitem{tois/Nong13}
\begin{barticle}
\bauthor{\bsnm{Nong}, \binits{G.}}:
\batitle{Practical linear-time \emph{O}(1)-workspace suffix sorting for
  constant alphabets}.
\bjtitle{{ACM} Trans. Inf. Syst.}
\bvolume{31}(\bissue{3}),
\bfpage{15}
(\byear{2013})
\end{barticle}
\endbibitem

\bibitem{FM05}
\begin{barticle}
\bauthor{\bsnm{Ferragina}, \binits{P.}},
\bauthor{\bsnm{Manzini}, \binits{G.}}:
\batitle{Indexing compressed text}.
\bjtitle{Journal of the ACM (JACM)}
\bvolume{52}(\bissue{4}),
\bfpage{552}--\blpage{581}
(\byear{2005})
\end{barticle}
\endbibitem

\bibitem{tcs/LouzaGT17}
\begin{barticle}
\bauthor{\bsnm{Louza}, \binits{F.A.}},
\bauthor{\bsnm{Gog}, \binits{S.}},
\bauthor{\bsnm{Telles}, \binits{G.P.}}:
\batitle{Inducing enhanced suffix arrays for string collections}.
\bjtitle{Theor. Comput. Sci.}
\bvolume{678},
\bfpage{22}--\blpage{39}
(\byear{2017})
\end{barticle}
\endbibitem

\bibitem{repcorpus}
\begin{botherref}
http://pizzachili.dcc.uchile.cl/repcorpus.html
\end{botherref}
\endbibitem

\bibitem{STBASBM17}
\begin{barticle}
\bauthor{\bsnm{Stevens}, \binits{E.L.}},
\bauthor{\bsnm{Timme}, \binits{R.}},
\bauthor{\bsnm{Brown}, \binits{E.W.}},
\bauthor{\bsnm{Allard}, \binits{M.W.}},
\bauthor{\bsnm{Strain}, \binits{E.}},
\bauthor{\bsnm{Bunning}, \binits{K.}},
\bauthor{\bsnm{Musser}, \binits{S.}}:
\batitle{The public health impact of a publically available, environmental
  database of microbial genomes}.
\bjtitle{Frontiers in Microbiology}
\bvolume{8},
\bfpage{808}
(\byear{2017})
\end{barticle}
\endbibitem

\bibitem{LTPS09}
\begin{barticle}
\bauthor{\bsnm{Langmead}, \binits{B.}},
\bauthor{\bsnm{Trapnell}, \binits{C.}},
\bauthor{\bsnm{Pop}, \binits{M.}},
\bauthor{\bsnm{Salzberg}, \binits{S.L.}}:
\batitle{Ultrafast and memory-efficient alignment of short {DNA} sequences to
  the human genome}.
\bjtitle{Genome biology}
\bvolume{10}(\bissue{3}),
\bfpage{25}
(\byear{2009})
\end{barticle}
\endbibitem

\bibitem{LS12}
\begin{barticle}
\bauthor{\bsnm{Langmead}, \binits{B.}},
\bauthor{\bsnm{Salzberg}, \binits{S.L.}}:
\batitle{Fast gapped-read alignment with {Bowtie 2}}.
\bjtitle{Nature Methods}
\bvolume{9}(\bissue{4}),
\bfpage{357}--\blpage{360}
(\byear{2012}).
doi:\doiurl{10.1038/nmeth.1923}
\end{barticle}
\endbibitem

\bibitem{LD10}
\begin{barticle}
\bauthor{\bsnm{Li}, \binits{H.}},
\bauthor{\bsnm{Durbin}, \binits{R.}}:
\batitle{Fast and accurate long-read alignment with burrows--wheeler
  transform}.
\bjtitle{Bioinformatics}
\bvolume{26}(\bissue{5}),
\bfpage{589}--\blpage{595}
(\byear{2010})
\end{barticle}
\endbibitem

\bibitem{LYLLYKW09}
\begin{barticle}
\bauthor{\bsnm{Li}, \binits{R.}},
\bauthor{\bsnm{Yu}, \binits{C.}},
\bauthor{\bsnm{Li}, \binits{Y.}},
\bauthor{\bsnm{Lam}, \binits{T.-W.}},
\bauthor{\bsnm{Yiu}, \binits{S.-M.}},
\bauthor{\bsnm{Kristiansen}, \binits{K.}},
\bauthor{\bsnm{Wang}, \binits{J.}}:
\batitle{Soap2: an improved ultrafast tool for short read alignment}.
\bjtitle{Bioinformatics}
\bvolume{25}(\bissue{15}),
\bfpage{1966}--\blpage{1967}
(\byear{2009})
\end{barticle}
\endbibitem

\bibitem{GNP18}
\begin{bchapter}
\bauthor{\bsnm{Gagie}, \binits{T.}},
\bauthor{\bsnm{Navarro}, \binits{G.}},
\bauthor{\bsnm{Prezza}, \binits{N.}}:
\bctitle{Optimal-time text indexing in bwt-runs bounded space}.
In: \bbtitle{Proceedings of the 29th Symposium on Discrete Algorithms (SODA)},
pp. \bfpage{1459}--\blpage{1477}
(\byear{2018})
\end{bchapter}
\endbibitem

\bibitem{gbmp2014sea}
\begin{bchapter}
\bauthor{\bsnm{Gog}, \binits{S.}},
\bauthor{\bsnm{Beller}, \binits{T.}},
\bauthor{\bsnm{Moffat}, \binits{A.}},
\bauthor{\bsnm{Petri}, \binits{M.}}:
\bctitle{From theory to practice: Plug and play with succinct data structures}.
In: \bbtitle{13th International Symposium on Experimental Algorithms, (SEA
  2014)},
pp. \bfpage{326}--\blpage{337}
(\byear{2014})
\end{bchapter}
\endbibitem

\bibitem{consortium_global_2015}
\begin{barticle}
\bauthor{\bsnm{Consortium}, \binits{T..G.P.}}:
\batitle{A global reference for human genetic variation}.
\bjtitle{Nature}
\bvolume{526}(\bissue{7571}),
\bfpage{68}--\blpage{74}
(\byear{2015}).
doi:\doiurl{10.1038/nature15393}.
Accessed 2018-09-28
\end{barticle}
\endbibitem

\bibitem{bcftools}
\begin{barticle}
\bauthor{\bsnm{Narasimhan}, \binits{V.}},
\bauthor{\bsnm{Danecek}, \binits{P.}},
\bauthor{\bsnm{Scally}, \binits{A.}},
\bauthor{\bsnm{Xue}, \binits{Y.}},
\bauthor{\bsnm{Tyler-Smith}, \binits{C.}},
\bauthor{\bsnm{Durbin}, \binits{R.}}:
\batitle{{{B}{C}{F}tools/{R}o{H}: a hidden {M}arkov model approach for
  detecting autozygosity from next-generation sequencing data}}.
\bjtitle{Bioinformatics}
\bvolume{32}(\bissue{11}),
\bfpage{1749}--\blpage{1751}
(\byear{2016})
\end{barticle}
\endbibitem

\bibitem{metasub}
\begin{barticle}
\bauthor{\bsnm{{MetaSUB International Consortium}}}:
\batitle{{The Metagenomics and Metadesign of the Subways and Urban Biomes
  (MetaSUB) International Consortium inaugural meeting report}}.
\bjtitle{Microbiome}
\bvolume{4}(\bissue{1}),
\bfpage{24}
(\byear{2016})
\end{barticle}
\endbibitem

\end{thebibliography}
\end{document}